\definecolor{denim}{rgb}{0.08, 0.38, 0.74}
\definecolor{deeppink}{rgb}{1.0, 0.08, 0.58}
\tikzset{
	basic box/.style={
		shape=rectangle, rounded corners, align=center, draw=#1, fill=#1!25, minimum height=20mm, minimum width = 20mm},
	header node/.style={
		node family/width=header nodes,
		font=\strut\Large\ttfamily,
		text depth=+.3ex, fill=white, draw},
	header/.style={%
		inner ysep=+1.5em,
		append after command={
			\pgfextra{\let\TikZlastnode\tikzlastnode}
			node [header node] (header-\TikZlastnode) at (\TikZlastnode.north) {#1}
			node [span=(\TikZlastnode)(header-\TikZlastnode)]
			at (fit bounding box) (h-\TikZlastnode) {}
		}
	},
	fat blue line/.style={ultra thick, blue}
}
\newcounter{theorem}
\newcounter{definition}
\newcounter{lemma}
\newcounter{claim}
\newcounter{problem}
\newcounter{proposition}
\newcounter{corollary}
\newcounter{construction}
\newcounter{example}
\newcounter{xca}
\newcounter{comments}
\newcounter{remark}
\newcounter{assumption}
\newtheorem{theorem}[theorem]{Theorem}
\newtheorem{lemma}[lemma]{Lemma}
\newtheorem{problem}[problem]{Problem}
\newtheorem{definition}[definition]{Definition}
\newtheorem{remark}[remark]{Remark}
\newtheorem{assumption}[assumption]{Assumption}
\numberwithin{equation}{section}
\DeclareFontFamily{U}{stix2bb}{}
\DeclareFontShape{U}{stix2bb}{m}{n} {<-> stix2-mathbb}{}
\newtcolorbox{resp}[1][]{%
	enhanced jigsaw,%
	colback=gray!5!white,%
	colframe=gray!80!black,%
	size=small,%
	boxrule=1pt,%
	halign title=flush center,%
	coltitle=black,%
	breakable,%
	drop shadow=black!50!white,%
	attach boxed title to top left={xshift=1cm,yshift=-\tcboxedtitleheight/2,yshifttext=-\tcboxedtitleheight/2},%
	minipage boxed title=3cm,%
	boxed title style={%
		colback=white,%
		size=fbox,%
		boxrule=1pt,%
		boxsep=2pt,%
		underlay={%
			\coordinate (dotA) at ($(interior.west) + (-0.5pt,0)$);
			\coordinate (dotB) at ($(interior.east) + (0.5pt,0)$);
			\begin{scope}[gray!80!black]
				\fill (dotA) circle (2pt);
				\fill (dotB) circle (2pt);
			\end{scope}
		}%
	},%
	#1%
}
\def\BS#1{{\textcolor{black}{#1}}}
\DeclareRobustCommand{\legendsquare}[1]{%
	\textcolor{#1}{\rule{1.5ex}{1.5ex}}%
}
\definecolor{unsafe}{rgb}{0.9, 0, 0.1}
\definecolor{initt}{rgb}{0.1, 0.3, 1}
\definecolor{state}{rgb}{0, 0.9, .7}
\def\BibTeX{{\rm B\kern-.05em{\sc i\kern-.025em b}\kern-.08em
		T\kern-.1667em\lower.7ex\hbox{E}\kern-.125emX}}
	\patchcmd{\@oddhead}{\\[-19pt]}{\\[-8pt]}{}{}%
	\patchcmd{\@evenhead}{\\[-19pt]}{\\[-8pt]}{}{}%
\definecolor{blue(ryb)}{rgb}{0.01, 0.28, 1.0}
\definecolor{fashionfuchsia}{rgb}{0.96, 0.0, 0.63}
\let\NAT@parse\undefined
\def\@opargbegintheorem#1#2#3{\textit{#1\ #2} \textit{(#3):}}
\begin{document}
	
\title{Data-Driven Control of Large-Scale Networks with Formal Guarantees: A Small-Gain Free Approach}
\author{Behrad Samari, \IEEEmembership{Student Member,~IEEE}, Amy Nejati, \IEEEmembership{Senior Member,~IEEE}, and 	Abolfazl Lavaei, \IEEEmembership{Senior Member,~IEEE}
	\thanks{All the authors are with the School of Computing, Newcastle University, United Kingdom. Email:
		\texttt{b.samari2@newcastle.ac.uk},
		\texttt{amy.nejati@newcastle.ac.uk},
		\texttt{abolfazl.lavaei@newcastle.ac.uk}. }
}
\maketitle
\begin{abstract}
This paper offers a \emph{data-driven divide-and-conquer} strategy to analyze large-scale interconnected networks, characterized by both \emph{unknown} mathematical models and interconnection topologies. Our data-driven scheme treats an unknown network as an interconnection of individual agents (a.k.a. subsystems) and aims at constructing their \emph{symbolic models}, referred to as \emph{discrete-domain} representations of unknown agents, by collecting data from their trajectories. The primary objective is to synthesize a control strategy that guarantees desired behaviors over an unknown network by employing local controllers, derived from symbolic models of individual agents. To achieve this, we leverage the concept of \emph{alternating sub-bisimulation function (ASBF)} to capture the closeness between the state trajectories of each unknown agent and its data-driven symbolic model. Under a newly developed \emph{data-driven compositional} condition, we then establish an \emph{alternating bisimulation function (ABF)} between an unknown network and its symbolic model based on ASBFs of individual agents while providing \emph{correctness guarantees}. Despite the sample complexity in existing work being \emph{exponential} with respect to the network size,  we demonstrate that our divide-and-conquer strategy significantly reduces it to a \emph{linear scale} with respect to the number of agents. We also showcase that our data-driven compositional condition does not necessitate the traditional small-gain condition, which demands precise knowledge of the interconnection topology for its fulfillment. We apply our data-driven findings to \BS{three} benchmarks comprising unknown networks with an \emph{arbitrary, a priori undefined} number of agents and unknown interconnection topologies. 
\end{abstract}

\begin{IEEEkeywords}
	Data-driven control, large-scale networks, compositional techniques, unknown interconnection topology, formal guarantees
\end{IEEEkeywords}

\section{Introduction} \label{Introduction}
\IEEEPARstart{I}{nterconnected} networks have emerged as invaluable assets for modeling a diverse array of practical engineering systems in recent decades. These networks serve crucial roles in various domains, ranging from automated vehicles to biological processes and energy infrastructures. In these applications, the number of involved agents can be exceedingly large, potentially \emph{unknown}, or subject to variation over time as agents might join or leave the network. Without careful consideration and rigorous mitigation strategies, scalability challenges of such complex networks have the potential to significantly degrade the system's performance~\cite{bamieh2002distributed,jovanovic2005ill}.

To offer formal assurances over the behavior of complex dynamical systems, \emph{symbolic models} (a.k.a. finite abstractions) have been introduced as abstract descriptions of original dynamics in discrete domains~\cite{tabuada2009verification,zamani2011symbolic,coogan2016finite,majumdar2020abstraction}. These techniques are instrumental in designing controllers aimed at enforcing complex specifications (\emph{e.g.,} a vehicle reaches its destination while avoiding obstacles) that are challenging to address using traditional control design methods. More precisely, symbolic models can be employed as suitable substitutes for original systems to provide formal analyses over the discrete domain. The obtained results can then be translated back to the original realm within a guaranteed error bound quantified between the state trajectories of the two systems using the notion of \emph{alternating bisimulation functions}~\cite{tabuada2009verification}. This approach ensures that the original system satisfies the same specifications as its symbolic model within a quantified error threshold.

Symbolic models are typically classified into two types: \emph{sound and complete} abstractions \cite{tabuada2009verification}. Complete abstractions provide both \emph{sufficient and necessary} guarantees, meaning that a controller exists to enforce a desired property on the abstraction \emph{if and only if} a controller exists for the same specification on the original system. In contrast, sound abstractions offer only \emph{sufficient} guarantees; thus, the inability to synthesize a controller through a sound abstraction does not necessarily indicate that a controller is absent for the original system.

While symbolic model techniques prove effective in analyzing dynamical systems, constructing them for large-scale networks introduces significant challenges that can be considered twofold. Firstly, the computational complexity of constructing symbolic models increases \emph{exponentially} with the system's dimension, often leading to the \emph{curse of dimensionality}. Additionally, and of greater significance, knowing the precise mathematical model of the underlying system is a prerequisite to constructing such abstract models. \emph{Compositional} techniques have subsequently been introduced to address the first challenge by constructing the symbolic model of an interconnected network based on symbolic models of its individual agents~\cite{swikir2019compositional,lavaei2019automated,nejati2023formal}. \BS{System identification, which has a rich literature, offers a means to tackle the second problem. Such methods, referred to as \emph{indirect} data-driven approaches~\cite{hou2013model, dorfler2022bridging}, typically involve first identifying a valid model  and then constructing finite abstractions together with their associated functions and relations. However, identifying a large-scale network as a whole is computationally prohibitive due to its high dimensionality; hence, subsystem-level identification is often required. Furthermore, while many well-established techniques exist for identifying models for linear subsystems or specific classes of nonlinear subsystems, current nonlinear system identification techniques still exhibit several limitations~\cite{kerschen2006past}. Even when subsystem models are successfully identified, one must still construct ASBFs, resulting in a \emph{two-stage computational burden}. Moreover, extending the analysis to the network level requires explicit knowledge of the interconnection topology, whose identification, despite a rich literature~\cite{materassi2010topological, nabi2012network, shahrampour2014topology}, remains an equally challenging task.}

\textbf{Innovative findings.} Inspired by these two key challenges, the primary contribution of this work lies in developing a \emph{direct} data-driven approach within a \emph{compositional framework} to circumvent the system identification step and directly utilize data for constructing symbolic models and similarity relations for large-scale networks, characterized by unknown models as well as unknown interconnection topologies. Our framework's backbone hinges on alternating bisimulation functions that measure the closeness between the trajectories of an interconnected network and its symbolic model. Within our data-driven scheme, we operate at the subsystem level and initially reframe the conditions of alternating sub-bisimulation functions as a robust optimization program (ROP). Considering the emergence of unknown subsystem dynamics within the ROP, which makes it intractable, we collect a set of data pairs from the trajectories of unknown subsystems and introduce a scenario optimization program (SOP) tailored to each ROP. We then introduce an innovative \emph{data-driven compositional} technique to construct an alternating bisimulation function for a network with an unknown \emph{interconnection  topology} via alternating sub-bisimulation functions of smaller subsystems, derived from data, while offering \emph{correctness guarantees}. Our data-driven technique notably reduces the sample complexity in existing studies, shifting from an \emph{exponential} dependence on the network size to a \emph{linear scale} with respect to the number of agents. Our compositional condition, derived from data, also eliminates the need for the traditional small-gain condition, which requires exact knowledge of the interconnection topology (cf.~\cite[Eq.~(3.7)]{swikir2019compositional}). Our proposed framework enables the construction of symbolic models for interconnected networks with an \emph{arbitrary, a priori undefined} number of subsystems, as demonstrated in the case study section.

\textbf{Related studies.} Several efforts have addressed either \emph{stability analysis} of interconnected networks~\cite{chaillet2014strong,dashkovskiy2007iss,tanner2002stability,mironchenko2020input} or \emph{symbolic model} constructions of interconnected networks~\cite{swikir2019compositional,meyer2017compositional}, relying on traditional small-gain conditions. However, these approaches assume knowledge of the precise mathematical model of the network, which is typically unknown in practice. Furthermore, to fulfill the traditional small-gain compositionality condition, these methods necessitate precise knowledge of the \emph{interconnection topology}, a challenge in real-world applications (\emph{e.g.,} in a network of vehicles where each vehicle can join or leave the network over time).

\BS{Dissipativity-based compositional reasoning provides another well-established means for the stability analysis and symbolic model construction of interconnected networks~\cite{lavaei2022dissipativity}. However, it requires explicit knowledge of the interconnection topology to infer network-level conclusions from subsystem-level results~\cite{2016Murat}. While some topology-free variants have been explored in~\cite{willems1972dissipative, haddad2008nonlinear}, these approaches rely on the assumption that the interconnection is neutral (lossless). While valuable, this neutrality assumption can be restrictive in practice, particularly in data-driven settings, where verifying such a property from finite data is challenging and requires further investigation. In contrast, our proposed data-driven framework remains topology-free and does not rely on any interconnection assumptions; instead, the effect of the interconnection structure is inherently captured by the data used to solve the problem (cf.~Remark~\ref{rem:Lip_why} and SOP~\eqref{SOP}).}

There has been limited work on constructing symbolic models using data. Existing results include symbolic model construction via a Gaussian process approach~\cite{hashimoto2022learning}, data-driven symbolic model of monotone systems with disturbances~\cite{makdesi2021efficient}, data-driven construction of symbolic models for verification and control of unknown systems~\cite{devonport2021symbolic, coppola2022data, coppola2024data, banse2024data}, and computation of growth bounds from data for constructing symbolic models~\cite{kazemi2024data,ajeleye2023data}. Since all these data-driven efforts aim at constructing symbolic models for \textit{monolithic systems}, they are not applicable to large-scale networks due to the \emph{exponential} curse of sample complexity, typically limiting abstractions to systems with dimensions up to 5.

Recent efforts have aimed to construct symbolic models for interconnected networks using data. Specifically, \cite{ajeleye2024data} introduces a data-driven method for finite abstractions of \emph{relatively} high-dimensional systems. However, the results proposed in \cite{ajeleye2024data} require approximating the interconnection map under the assumption that its exact Lipschitz constant is estimated with confidence~$1$ in the limit, a restrictive condition in practical applications that also adds computational complexity for large interconnection maps. More importantly, although the abstraction is constructed compositionally, the control synthesis is performed monolithically, leading to significant scalability issues for high-dimensional systems. This limitation is evident in the second case study in \cite{ajeleye2024data}, where only an $8$-dimensional system is considered. Our proposed framework, however, does not encounter any of these difficulties as it neither requires an approximation of the interconnection map nor relies on monolithic control synthesis (cf. our case studies with over 10000 subsystems). Additionally, while \cite{ajeleye2024data} provides sound abstractions (\emph{sufficient} guarantees), our method yields complete abstractions, offering both \emph{sufficient and necessary} guarantees. The work \cite{lavaei2023symbolic} also proposes a data-driven approach for constructing symbolic models for interconnected networks. However, their results come with a probabilistic confidence, whereas our approach provides deterministic correctness guarantees. Furthermore, the method in  \cite{lavaei2023symbolic} requires the network topology to be known, an a posteriori check of the small-gain condition (see circularity condition in \cite[Eq. (19)]{lavaei2023symbolic}), and a priori knowledge of the number of subsystems, all of which are relaxed in our work.

\section{Problem Description}\label{Problem_Description}

\subsection{Notation}
Sets of real, positive, and non-negative real numbers are denoted by $\mathbb{R}$, $\mathbb{R}^+$, and $\mathbb{R}_0^+$, respectively. Non-negative and positive integers are, respectively, represented by $\mathbb{N} := \{0,1,2,\ldots\}$ and $\mathbb{N}^+=\{1,2,\ldots\}$. Given $N$ vectors $x_i \in \mathbb{R}^{n_i}$, the corresponding column vector of dimension $\sum_i n_i$ is signified by $x=[x_1;\ldots;x_N]$.  The Euclidean norm of a vector $x\in\mathbb{R}^{n}$ is denoted by $\Vert x\Vert$. The Cartesian product of sets $X_i, i\in\{1,\ldots,N\}$, is represented by $\prod_{i=1}^{N}X_i$. Given sets $X$ and $Y$, a relation $\mathscr{R} \subseteq X \times Y$ is a subset of the Cartesian product $X \times Y$ that relates $x \in X$ to $y \in Y$ if $(x, y) \in \mathscr{R}$, equivalently denoted by $x \mathscr{R} y$.

\subsection{Interconnected Networks}
We initiate by defining subsystems (a.k.a. agents) as the fundamental building blocks, which are thereafter interconnected to form a large-scale network.
\begin{definition} \label{discrete}
	A discrete-time control subsystem (dt-CS) $\Upsilon\!_{i}, i \in \{1,\dots,M\}$,  can be represented as
	\begin{equation*}\label{eq:dt-SCS}
		\Upsilon\!_{i}=(X_i,U_i,W_i,f_i),
	\end{equation*}
	where:
	\begin{itemize}
		\item $X_i\subseteq \mathbb R^{n_i}$ is the state set of dt-CS $\Upsilon\!_{i}$; 
		\item   $U_i = \{u_i^1, u_i^2, \dots, u_{i}^{\bar m}\}$ with $u_i^j \in \mathbb{R}^{m_i}, j \in \{1,\dots, \bar m\}$, is the control input set of dt-CS $\Upsilon\!_{i}$;
		\item $W_i \!\subseteq \mathbb R^{p_i}$ is the disturbance input set of dt-CS $\Upsilon\!_{i}$; 
		\item $f_i\!:\!X_i{\times} U_i{\times} W_i \!\rightarrow \!X_i$ is the transition map that characterizes the evolution of dt-CS $\Upsilon\!_{i}$, which is \emph{unknown in our setting}.
	\end{itemize}
	The evolution of dt-CS $\Upsilon\!_i$ can be described by a difference equation, expressed as 
	\begin{equation}\label{Eq_1a}
		\Upsilon\!_i\!:x_i(k+1) = f_i(x_i(k),u_i(k),w_i(k)),\quad 
		k\in \mathbb N,
		\quad
	\end{equation}
	where $x_i\!:\mathbb{N} \rightarrow X_i$, $u_i:\mathbb{N} \rightarrow U_i$, and  $w_i\!:\mathbb{N} \rightarrow W_i$ are the state, control and disturbance input signals, respectively.
\end{definition}

\BS{We now impose the following assumption on the transition map $f_i$, which is required throughout the paper.}
\begin{assumption}\label{assump:Lip_f}
		\BS{For each subsystem $\Upsilon_i = (X_i, U_i, W_i, f_i)$, the transition map $f_i(x_i,u_i,w_i)$ is Lipschitz continuous with respect to $(x_i, w_i)$ for every fixed $u_i \in U_i$. That is, there exists $L_i \in \mathbb{R}^+$ satisfying
		\[
		\| f_i(x_i, u_i, w_i) - f_i(x_i^\prime, u_i, w_i^\prime) \| \le L_i \big(\|x_i - x_i^\prime\| + \|w_i - w_i^\prime\|\big),
		\]
		for all $x_i, x_i^\prime \in X_i$, $w_i, w_i^\prime \in W_i$, and $u_i \in U_i$.}
\end{assumption}
\begin{remark}
	The primary rationale for treating $U_i$ as a finite set stems from the prevalent application of \emph{digital controllers} in practical scenarios. In addition, the disturbance input $w_i$ captures the influence from neighboring subsystems connected to a particular subsystem within the interconnection topology, acting as an \emph{adversarial} input.
\end{remark}

The subsequent definition formally delineates interconnected networks, formed by interconnecting disturbance inputs of individual subsystems.

\begin{definition} \label{network}
	Consider dt-CS $\Upsilon\!_{i}=(X_i,U_i,W_i,f_i), i\in\{1,\dots,M\}$, 
	with the following interconnection constraint:
	\begin{align}\label{Eq:4}
		\BS{[w_{1}; \dots; w_M] = g(x_1,\dots, x_M),}
	\end{align}
	where $g: \prod_{i = 1}^M X_i \to  \prod_{i = 1}^M W_i$ is an interconnection map.  Then, an interconnected network can be expressed by the tuple $\Upsilon = (X, U, f)$, where $X\coloneq\prod_{i=1}^{M}X_i$,  $U\coloneq\prod_{i=1}^{M}U_i$, and $f \coloneq [f_1;\dots;f_M]$\BS{, with $f$ inherently incorporating the influence of the unknown interconnection map $g$ in~\eqref{Eq:4}}.
	The interconnected network, denoted as $\Upsilon = \mathscr{N}\!(\Upsilon_1,\ldots,\Upsilon_M)$, operates according to
	\begin{equation}\label{Eq_1a1}
		\Upsilon\!:x(k+1) = {f(x(k),u(k))}, \quad k\in\mathbb N,
	\end{equation}
	\BS{where $x : \mathbb{N} \rightarrow X$ and $u : \mathbb{N} \rightarrow U$ represent the state and control signals of the interconnected network $\Upsilon$ in~\eqref{Eq_1a1}, which are defined as
		\(
		x \coloneq [x_1;\dots;x_M] \text{ and } u \coloneq [u_1;\dots;u_M].
		\)}
     We refer to a sequence $x_{x_0 u}\!\!: \mathbb N \rightarrow X$ that satisfies~\eqref{Eq_1a1} as the \textit{state trajectory} of $\Upsilon$, starting from an initial state $x_0$, subject to an input trajectory $u(\cdot)$.
\end{definition}

\begin{remark}
In our framework, we allow the interconnection topology $g$ in \eqref{Eq:4} to be \emph{unknown}, a common scenario in many real-world applications. Notably, in the small-gain results presented in \cite{lavaei2023symbolic}, this topology must not only be known but also specifically constrained as $w_{ij} = x_j$ for any $i, j \in \{1, \dots, M\}, i \neq j$, with $w_{ij}$ being partition elements of $w_i$ (see \cite[(3) and (4)]{lavaei2023symbolic}). In contrast, our work generalizes this constraint to an \emph{unknown} interconnection map $g$, which is not restricted to any specific form \BS{(cf. Section~\ref{subsec:HN_Case})}.
\end{remark}

\subsection{Symbolic Models}
Considering that the original subsystems evolve in a \emph{continuous-space} domain, analyzing them poses significant challenges. To alleviate this, we approximate a subsystem $\Upsilon_{\!i}$, $i\in\{1,\dots,M\}$, with a symbolic model characterized by \emph{discrete} state and disturbance input sets~\cite{pola2016symbolic}. The approximation algorithm initially partitions continuous state and disturbance input sets into finite segments, denoted as $X_i=\cup_j \mathsf{X}_i^j$ and $W_i=\cup_j \mathsf{W}_i^j$, respectively, and then selects representative points $\hat{x}_i ^j\in \mathsf{X}_i^j$ and $\hat{w}_i ^j\in \mathsf{W}_i^j$ as discrete states and disturbance inputs. In the following definition, we formally introduce how a symbolic model can be constructed.

\begin{definition}\label{def:sym}
	Given a dt-CS $\Upsilon\!_i=(X_i, U_i, W_i, f_i),  i\in\{1,\dots,M\}$, the symbolic model $\hat{\Upsilon}\!_i$ can be constructed as
	\begin{align}\label{symbolic}
		\hat{\Upsilon}\!_i=(\hat{X}_i, U_i, \hat W_i, \hat{f}_i),
	\end{align}
	where $\hat{X}_i\coloneq\big\{\hat{x}_i^j \mid j=1, \ldots, n_{\hat{x}_i}\big\}$ and $\hat{W}_i\coloneq\big\{\hat{w}_i^j \mid j=1, \ldots, n_{\hat{w}_i}\big\}$ are discrete state and disturbance input sets of $\hat{\Upsilon}\!_i$, respectively. Moreover, $\hat{f}_i: \hat{X}_i \times U_i \times \hat W_i \rightarrow \hat{X}_i$ is defined as
	\begin{align}
		\hat{f}_i(\hat{x}_i, u_i, \hat w_i)=\mathcal{Q}_i(f_i(\hat{x}_i, u_i, \hat w_i)),\label{eq:4.5}
	\end{align}
	where the quantization map $\mathcal{Q}_i: X_i \rightarrow \hat{X}_i$ allocates to any $x_i \in X_i$ and $w_i \in W_i$ a representative point $\hat{x}_i \in \hat{X}_i$ and $\hat{w}_i \in \hat{W}_i$ of the corresponding partition set, and satisfies the inequality
	\begin{align}\label{EQ:4}
		\|\mathcal{Q}_i(x_i)-x_i\| \leq \delta_i, \quad \forall x_i \in X_i,
	\end{align}
	with $\delta_i\coloneq\sup \Big\{\big\|x_i-x_i^{\prime}\big\|, x_i, x_i^{\prime} \in \mathsf{X}_i^j, j=1,2, \ldots, n_{\hat{x}_i}\!\Big\}$ being the \emph{state discretization parameter}.
\end{definition}

Since symbolic models in~\eqref{symbolic} operate in a discrete domain, algorithmic techniques from computer science can be applied to synthesize controllers enforcing complex logical properties~\cite{baier2008principles}. The primary concern is how to transfer  properties of interest demonstrated by symbolic models to the original systems. To accomplish this, it is necessary to establish a \emph{similarity relation} between state trajectories of two systems, employing the concept of alternating (sub-)bisimulation functions, as detailed in the next section.

\section{Alternating (Sub-)Bisimulation Functions} \label{barrier}
We initially establish the concept of alternating sub-bisimulation function between a dt-CS and its symbolic model incorporating disturbance inputs, as outlined in the following definition~\cite{tabuada2009verification,swikir2019compositional}.
\begin{definition} \label{Def:41}
	Consider a dt-CS $\Upsilon\!_i=(X_i,U_i, W_i,f_i)$ and its symbolic model $\hat{\Upsilon}\!_i=(\hat{X}_i, U_i, \hat W_i, \hat{f}_i)$. A function $\mathcal{V}_i:X_i\times \hat X_i\rightarrow\mathbb{R}_{0}^{+}$ is an \emph{alternating sub-bisimulation function (ASBF)} between $\hat \Upsilon\!_i$ and $\Upsilon\!_i$, denoted by $\hat{\Upsilon}_{\! i} \cong_{{\mathcal{V}_i}} \!\!\Upsilon_{\! i} $, if the following conditions hold:
	\begin{subequations}\label{6}
		\begin{itemize}
			\item $\forall x_i \in X_i,\forall \hat x_i \in \hat X_i\!:$
			\begin{align}
				\alpha_i\Vert x_i-\hat x_i\Vert^2 \leq \mathcal V_i(x_i,\hat x_i), \label{Eq:8_21}
			\end{align}
			\item $\forall x_i \in X_i,\forall \hat x_i \in \hat X_i,\forall u_i \in U_i, \forall w_i \in W_i,\forall \hat w_i \in \hat W_i\!:$
			\begin{align}
				&\mathcal V_i\big(f_i(x_i,u_i,w_i),\hat f_i(\hat x_i,u_i,\hat w_i)\big)\notag\\& \leq\;\gamma_i\mathcal V_i(x_i,\hat x_i)+\rho_i\Vert w_i-\hat w_i\Vert^2 + \psi_i, \label{Eq:8_22}
			\end{align}
		\end{itemize}
	\end{subequations}
	for some $\alpha_i,\psi_i  \in \mathbb{R}^{+}$, $\rho_i\in\mathbb{R}_{0}^{+}$, and $\gamma_i\in (0,1)$.
\end{definition}

\begin{remark}
	The ASBF establishes a similarity relation between the state trajectories of each subsystem and its symbolic model. In essence, if the states of $\Upsilon_{\! i}$ and $\hat{\Upsilon}_{\! i}$ begin from two close initial conditions, captured by $\mathcal{V}_i(x_i,\hat{x}_i)$ in~\eqref{Eq:8_21}, they will maintain proximity as their dynamics evolve over the subsequent time horizon, captured by the right-hand side of~\eqref{Eq:8_22}~\cite{tabuada2009verification}.
\end{remark} 

We now introduce a similar notion to establish a relation between two networks \emph{without} disturbance inputs.
\begin{definition} \label{cbc}
	Consider an interconnected network $\Upsilon = (X, U, f)$ and its symbolic model $\hat{\Upsilon}=(\hat X, U, \hat f)$. A function $\mathcal V:X\times \hat X \rightarrow \mathbb{R}_0^+$ is referred to as an \emph{alternating bisimulation  function (ABF)} between $\hat \Upsilon$ and $\Upsilon$, denoted by $\hat{\Upsilon} \cong_{\mathcal{V}} \!\!\Upsilon$, if there exist  $\alpha,\psi \in \mathbb{R}^{+}$, and $\gamma\in (0,1)$, such that the subsequent conditions hold:
	\begin{subequations}
		\begin{itemize}
			\item $\forall x \in X, \forall \hat x \in \hat X\!:$
			\begin{align}
				\alpha\Vert x-\hat x\Vert ^2\leq \mathcal V(x,\hat x), \label{alpha12}
			\end{align}
			\item  $\forall x \in X, \forall \hat x \in \hat X, \forall u\in U\!:$
			\begin{align}
				\mathcal V\big(f(x,u), \hat f(\hat x, u)\big) \leq \gamma\mathcal V(x,\hat x) + \psi. \label{alpha1}
			\end{align}
		\end{itemize}
	\end{subequations}
\end{definition}
\begin{remark}\label{rem:existence}
	\BS{According to~\cite{pola2009symbolic, swikir2019compositional}, the existence of an ASBF between a subsystem $\Upsilon_i$ and its symbolic model $\hat{\Upsilon}_i$ can be ensured if $\Upsilon_i$ is incrementally input-to-state stabilizable ($\delta$-ISS); see~\cite{angeli2002lyapunov, tran2016incremental} for more details. While the conditions of incremental ISS Lyapunov functions and ASBFs are similar, the main distinction lies in the state spaces on which they are defined; incremental ISS Lyapunov functions are defined over $X_i \times X_i$, whereas ASBFs are defined over $X_i \times \hat X_i$ (cf.~Definition~\ref{Def:41}), thereby relating trajectories of $\Upsilon_{\! i}$ and $\hat{\Upsilon}_{\! i}$. 
		In the particular case of linear dynamics, where incremental ISS coincides with conventional stabilizability, an ASBF exists whenever $\Upsilon_{\! i}$ is stabilizable. We note that once ASBFs are established for all subsystems, an ABF between the overall interconnected network and its symbolic model can be compositionally derived, provided that the sufficient compositionality condition holds (cf. Theorem~\ref{Thm1}).}
\end{remark}

\BS{In the following, we formally define an $\epsilon$-approximate alternating bisimulation relation between $\hat{\Upsilon}$ and $\Upsilon$, which is subsequently used to quantify the distance between their state trajectories~\cite[Definition~14]{girard2007approximation}.}

\begin{definition}\label{def:eps_rel}
	\BS{
	Consider an interconnected network $\Upsilon = (X, U, f)$ and its symbolic model $\hat{\Upsilon} = (\hat{X}, U, \hat{f})$. 
	A relation $\mathscr{R} \subseteq X \times \hat{X}$ is defined as an $\epsilon$-approximate alternating bisimulation relation between $\hat{\Upsilon}$ and $\Upsilon$ if, for all $(x, \hat{x}) \in \mathscr{R}$, the following conditions hold: 
	\begin{itemize}
		\item[(i)] $\|x - \hat{x}\| \le \epsilon$;
		\item[(ii)] For all $u \in U$, let $x^\prime = f(x,u)$ and $\hat{x}^\prime = \hat{f}(\hat{x},u)$. Then $(x^\prime,\, \hat{x}^\prime) \in \mathscr{R}$.
	\end{itemize}
	}
\end{definition}

\BS{Definition~\ref{def:eps_rel} implies that the interconnected network $\Upsilon$ and its symbolic model $\hat{\Upsilon}$ can replicate each other’s trajectories up to a deviation bounded by $\epsilon$. Building on Definition~\ref{def:eps_rel}, the following theorem employs an ABF to formally quantify the distance between the state trajectories of an interconnected network $\Upsilon$ and its symbolic model $\hat{\Upsilon}$.}
\begin{theorem}\label{thm-J19}
	Consider an interconnected network $\Upsilon = (X, U, f)$ and its symbolic model $\hat{\Upsilon} = (\hat{X}, U, \hat{f})$. 
	Suppose $\mathcal{V}$ is an ABF between $\hat{\Upsilon}$ and $\Upsilon$ as in Definition~\ref{cbc}. 
	Then, the relation $\mathscr{R} \subseteq X \times \hat{X}$ defined by
	\begin{align}\label{relation}
		\mathscr{R} := \{ (x, \hat{x}) \in X \times \hat{X} \mid \mathcal{V}(x, \hat{x}) \leq \bar{\psi} \},
	\end{align}
	is an $\epsilon$-approximate alternating bisimulation relation between $\hat{\Upsilon}$ and $\Upsilon$\BS{, as in Definition~\ref{def:eps_rel}}, with
	\begin{align}\label{error}
		\epsilon = \left(\frac{\bar{\psi}}{\alpha}\right)^{\!\!\frac{1}{2}}\!, 
		\quad \text{where} \quad 
		\bar{\psi} = \frac{\psi}{(1 - \gamma)\eta},
	\end{align}
	for any $0 < \eta < 1$.
\end{theorem}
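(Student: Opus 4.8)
The plan is to verify directly that the sublevel set $\mathscr{R}$ defined in \eqref{relation} satisfies the two defining requirements of an $\epsilon$-approximate alternating bisimulation relation between $\hat{\Upsilon}$ and $\Upsilon$: first, that every related pair is $\epsilon$-close in the state space; and second, that $\mathscr{R}$ is invariant under the shared (finite) input alphabet $U$, in the sense that matching the abstract input with the concrete one on one side, and the concrete input with the abstract one on the other, always drives the pair of successors back into $\mathscr{R}$. Because the interconnected network $\Upsilon$ carries no disturbance input and $\Upsilon, \hat\Upsilon$ are deterministic maps over the same input set $U$, the two quantifier alternations appearing in the definition both collapse to the trivial choice $\hat u = u$, so a single sublevel-set computation suffices.

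For the closeness part, take any $(x,\hat x)\in\mathscr{R}$, so that $\mathcal{V}(x,\hat x)\le\bar\psi$. Condition \eqref{alpha12} then gives $\alpha\|x-\hat x\|^2\le\mathcal{V}(x,\hat x)\le\bar\psi$, whence $\|x-\hat x\|\le(\bar\psi/\alpha)^{1/2}=\epsilon$, which is exactly the value in \eqref{error}. This step uses only the quadratic lower bound on $\mathcal{V}$.

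For invariance, fix $(x,\hat x)\in\mathscr{R}$ and any $u\in U$, and take $\hat u = u$. Applying the evolution condition \eqref{alpha1} and then $\mathcal{V}(x,\hat x)\le\bar\psi$,
\begin{align*}
\mathcal{V}\big(f(x,u),\hat f(\hat x,u)\big)\;\le\;\gamma\,\mathcal{V}(x,\hat x)+\psi\;\le\;\gamma\bar\psi+\psi.
\end{align*}
It then remains to check the scalar inequality $\gamma\bar\psi+\psi\le\bar\psi$, i.e. $\bar\psi\ge\psi/(1-\gamma)$. Substituting $\bar\psi=\psi/\big((1-\gamma)\eta\big)$ with $0<\eta<1$ gives $\bar\psi=\eta^{-1}\,\psi/(1-\gamma)>\psi/(1-\gamma)$, so the inequality holds (with strict slack), and therefore $\mathcal{V}(f(x,u),\hat f(\hat x,u))\le\bar\psi$, i.e. $(f(x,u),\hat f(\hat x,u))\in\mathscr{R}$. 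The symmetric direction (given $\hat u\in U$, pick $u=\hat u$) is identical. Combining closeness and invariance establishes the claim.

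I do not expect a real obstacle in this argument — it is a short level-set invariance computation once the quantifier structure is sorted out. The only point requiring care is the bookkeeping around the (not restated) definition of an $\epsilon$-approximate alternating bisimulation relation: one must make sure that, since the network has its internal disturbances already closed through the interconnection map and both systems share the input set $U$, the input-matching $\hat u = u$ indeed discharges both simulation directions, so that the single inequality $\gamma\bar\psi+\psi\le\bar\psi$ is all that is needed. The role of $\eta$ is purely as a design parameter: it controls how tight the invariant sublevel set $\{\mathcal{V}\le\bar\psi\}$ is, and hence the precision $\epsilon$; pushing $\eta\uparrow1$ recovers the sharpest bound $\epsilon\to\big(\psi/((1-\gamma)\alpha)\big)^{1/2}$, while $\eta<1$ is what makes the sublevel set strictly invariant.
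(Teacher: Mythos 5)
Your proof is correct and follows essentially the same route as the paper: closeness of related pairs via the lower bound \eqref{alpha12}, and forward invariance of the sublevel set $\{\mathcal V\le\bar\psi\}$ via \eqref{alpha1}. The only cosmetic difference is that the paper passes through the intermediate bound $\gamma\mathcal V+\psi\le\max\{\bar\gamma\mathcal V,\bar\psi\}$ with $\bar\gamma=1-(1-\eta)(1-\gamma)$, whereas you verify the equivalent scalar inequality $\gamma\bar\psi+\psi\le\bar\psi$ directly, which is a slightly more economical way of saying the same thing.
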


\begin{proof}
	\BS{The proof comprises two parts based on Definition~\ref{def:eps_rel}: 
		(i) showing that $\|x-\hat x\|\le \epsilon$ for every $(x,\hat x)\in\mathscr R$, and 
		(ii) for any $(x,\hat x)\in\mathscr R$ and any $u\in U$, showing that $(x^\prime,\hat x^\prime)\in\mathscr R$, where $x^\prime = f(x,u)$ and 
		$\hat x^\prime = \hat f(\hat x,u)$.}
	The first part can be shown according to condition~\eqref{alpha12} and the relation $\mathscr{R}$ in~\eqref{relation}: 
	\begin{align*}
		\alpha\Vert x-\hat x\Vert ^2\leq \mathcal V(x,\hat x) \leq \bar \psi  \quad \to \quad \Vert x-\hat x\Vert \leq \big(\frac{\bar\psi}{\alpha}\big)^{\!\frac{1}{2}} = \epsilon.
	\end{align*}
	We now proceed with showing the second item. Since
	\begin{align*}
		\mathcal V\big(f(x,u), \hat f(\hat x, u)\big) \leq \gamma\mathcal V(x,\hat x) + \psi \leq \max\{\bar\gamma\mathcal V(x,\hat x), \bar\psi \},
	\end{align*}
	with $\bar\gamma$ and $\bar\psi$ as
	\begin{align*}
		\bar\gamma = 1 - (1 - \eta)(1 - \gamma),\quad \bar\psi = \frac{\psi}{(1 - \gamma)\eta},
	\end{align*} 
	for any~ $0<\eta <1$, one has $\mathcal{V}(x', \hat{x}') \leq \bar\psi$ given that $\bar \gamma\in(0,1)$ and $\mathcal{V}(x, \hat{x}) \leq \bar\psi$ according to~\eqref{relation}, indicating that $(x',\hat x') \in \mathscr{R}$, which concludes the proof.
\end{proof}

\begin{remark}
	Since Theorem \ref{thm-J19} guarantees closeness between the state trajectories of an interconnected network and its symbolic model, the underlying findings can be utilized to enforce various complex properties over the interconnected network, such as safety, reachability, and reach-while-avoid \cite{tabuada2009verification}. In particular, a symbolic model can facilitate the enforcement of such properties in simplified discrete-space domains and the refinement of the results back to the complex original systems while maintaining a quantifiable error bound on their closeness, as specified in \eqref{error}; all of which is made possible by the power of Theorem \ref{thm-J19}.
\end{remark}
\begin{remark}\label{rem:connection_dissi_small}
	\BS{The conditions in~\eqref{6}–\eqref{error} are not substitutes for small-gain conditions; rather, they pertain to ASBFs, ABFs, and the $\epsilon$-approximate alternating bisimulation relation. The only conceptual relation between our work and the small-gain results in the literature is that, in approaches grounded in small-gain reasoning, the right-hand side of condition~\eqref{Eq:8_22} typically includes the term $\rho_i\Vert w_i-\hat{w}_i\Vert^2$ to capture the influence of the disturbance input, which is also the case in our work. If one is interested in extending our results to a dissipativity-based reasoning setting, the term $\rho_i\Vert w_i-\hat w_i\Vert^2$ in~\eqref{Eq:8_22} should be replaced by
		\(
		\begin{bmatrix}
			w_i\\
			x_i
		\end{bmatrix}^\top\!
		{\underbrace{\begin{bmatrix}
					\mathds Z^{11}_i& \mathds  Z^{12}_i\\
					\mathds Z^{21}_i& \mathds  Z^{22}_i
			\end{bmatrix}}_{\mathds Z_i}}\!\begin{bmatrix}
			w_i\\
			x_i
		\end{bmatrix}\!\!,
		\)
		where $\mathds Z_i$ is a symmetric block matrix, which needs to be designed.}
\end{remark}

Generally, establishing a similarity relation \BS{using an ABF} between a large-scale network and its symbolic model introduces a significant challenge due to its  computationally intensive nature. To tackle this, our \emph{divide-and-conquer} strategy entails considering each network as an interconnection of individual subsystems with smaller dimensions, focusing solely on the subsystem level by constructing ASBFs \emph{from data}, and then integrating them to form an ABF for the interconnected network under an innovative \emph{data-driven compositionality} condition. It is worth noting that constructing an ASBF directly for each subsystem faces obstacles due to the appearance of \emph{unknown models} in condition \eqref{Eq:8_22}.

Inspired by the underlying challenges, we now formally define the problem that we aim to solve in this work.

\begin{resp}
	\begin{problem}\label{Problem: linear}
		Consider an interconnected network $\Upsilon = \mathscr{N}\!(\Upsilon_1,\ldots,\Upsilon_M)$ with an \emph{unknown} interconnection topology, comprising an \emph{arbitrary, a priori undefined} number of agents $\Upsilon\!_i$, each with an \emph{unknown dynamics} $f_i$. Develop a \emph{direct data-driven divide-and-conquer strategy} for constructing an ABF between $\hat \Upsilon$ and $\Upsilon$ based on ASBFs of individual subsystems $\Upsilon_{\! i}$ and $\hat{\Upsilon}_{\! i}$, while providing provable correctness guarantees. Accordingly, synthesize a controller that guarantees desired behaviors over the unknown network by employing local controllers derived from symbolic models of its subsystems.
	\end{problem}
\end{resp}

In the following section, we introduce our data-driven approach to address Problem~\ref{Problem: linear}.

\section{\BS{Data-Driven Construction of Symbolic Models and ASBFs}}\label{Data-Driven}
\BS{In this section, we propose our data-driven framework for the construction of ASBFs. Specifically, Section~\ref{sub:data_symbolic} presents the data-driven procedure for constructing symbolic models, which is subsequently used for the data-driven construction of ASBFs in Section~\ref{sub:data_ASBFs}.}

\subsection{\BS{Data-Driven Construction of Symbolic Models}}\label{sub:data_symbolic}
\BS{For simplicity, we present a simple example to describe the construction of the symbolic model $\hat f_i(\hat x_i, u_i, \hat w_i)$ and express each step through its corresponding mathematical equation. For brevity, we drop the subscript $i$ in this subsection, indicating that we describe the data-driven construction of the symbolic model for a single subsystem; the procedure for the remaining subsystems is similar.}

\BS{To this end, consider a system with the state vector $x = [x_1;\,  x_2]$, where the state set is defined as $X = X_1 \times X_2$, with $X_1$ and $X_2$ corresponding to the state variables $x_1$ and $x_2$, respectively. Similarly, let the disturbance input set be $W = W_1 \times W_2$, where $W_1$ and $W_2$ represent the disturbance inputs $w_1$ and $w_2$, respectively, with the disturbance input vector given by $w = [w_1; \, w_2]$. Assume that the system has a scalar control input $u$, with $U = \{u_1, u_2, \dots, u_{\bar m}\}$ denoting the associated discrete control input set.}

\BS{To construct the symbolic model $\hat f(\hat x, u, \hat w)$ in a data-driven fashion, we first overlay a grid on the state and disturbance input sets. Since the control input set $U$ is already discrete, no discretization is required for it. Consequently, by considering the grid with the state discretization parameter $\delta$ (cf.~\eqref{EQ:4}), the discrete state and disturbance input sets $\hat X = \hat X_1 \times \hat X_2$ and $\hat W = \hat W_1 \times \hat W_2$ are obtained, where $\hat X_1 = \{\hat x_1^1, \dots, \hat x_1^{n_{\hat x_1}}\}$ and $\hat X_2 = \{\hat x_2^1, \dots, \hat x_2^{n_{\hat x_2}}\}$, with $n_{\hat x_1}$ and $n_{\hat x_2}$ being the cardinalities of the discrete sets $\hat X_1$ and $\hat X_2$, respectively. Likewise, we have $\hat W_1 = \{\hat w_1^1, \dots, \hat w_1^{n_{\hat w_1}}\}$ and $\hat W_2 = \{\hat w_2^1, \dots, \hat w_2^{n_{\hat w_2}}\}$, with $n_{\hat w_1}$ and $n_{\hat w_2}$ being the cardinalities of the discrete sets $\hat W_1$ and $\hat W_2$, respectively. Thus, one obtains $\hat{X} = \big\{\hat{x}^j \mid j=1, \ldots, n_{\hat x} \big\}$, with $n_{\hat x} = n_{\hat x_1} n_{\hat x_2}$ and $\hat x^{j} = [\hat x^{j_1}_1;\, \hat x^{j_2}_2], j_1\in\{1,\dots,n_{\hat x_1}\}, j_2\in\{1,\dots,n_{\hat x_2}\}$. Similarly, we have $\hat{W} = \big\{\hat{w}^j \mid j=1, \ldots, n_{\hat w} \big\}$, with $n_{\hat w}  = n_{\hat w_1} n_{\hat w_2}$ and $\hat w^{j} = [\hat w^{j_1}_1;\, \hat w^{j_2}_2], j_1\in\{1,\dots,n_{\hat w_1}\}, j_2\in\{1,\dots,n_{\hat w_2}\}$. For ease of notation, we drop the superscript $j$ from $\hat x^{j}$ and $\hat w^{j}$ in the sequel.}

\begin{figure}[t!]
	\centering
	\includegraphics[width=0.9\linewidth]{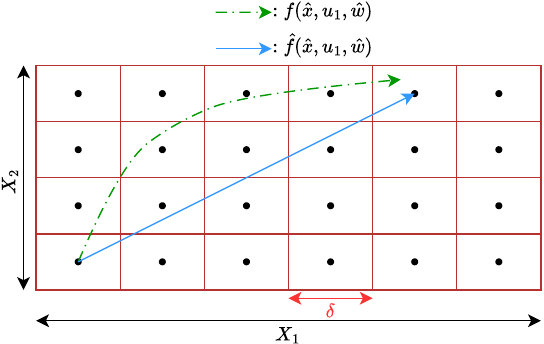}
	\caption{\BS{Illustration of the procedure for constructing the symbolic model. The center of each cell is taken as a representative point, corresponding to an element of $\hat X$.}}
	\label{fig:disc}
\end{figure}

\BS{For illustration, Figure~\ref{fig:disc} shows the grid overlaid on the state set. It is important to emphasize that the grid structure does not need to be rectangular; this particular choice is made solely for simplicity of implementation and visualization. Once these discrete sets are obtained, under a fixed control input $u_1$ selected from $U$, we initialize the system with $\hat x$ and $\hat w$ drawn from the discrete sets $\hat X$ and $\hat W$, respectively. Hence, it is clear that $f(\hat x, u_1, \hat w)$ is the black-box system output (\emph{i.e.}, the successor state) in this case.}

\BS{Having obtained $f(\hat x, u_1, \hat w)$, the next step is to determine $\hat f(\hat x, u_1, \hat w)$. Since $\hat f: \hat X \times U \times \hat W \rightarrow \hat X$, each cell center in Figure~\ref{fig:disc} (\emph{i.e.,} elements of the discrete set $\hat X$) can potentially represent the value of $\hat f$ corresponding to the computed $f(\hat x, u_1, \hat w)$. However, the valid $\hat f$, \emph{i.e.}, the appropriate cell center, is the one located $\delta$-close to $f(\hat x, u_1, \hat w)$. To put it mathematically, the valid $\hat f(\hat x, u_1, \hat w)$ is the one satisfying condition~\eqref{EQ:4} as
	\[
	\|\underbrace{\mathcal{Q}(f(\hat x, u_1, \hat w))}_{\hat f(\hat x, u_1, \hat w) ~ - ~\text{see}  ~\eqref{eq:4.5}} - f(\hat x, u_1, \hat w)\| \leq \delta,
	\]
	which corresponds precisely to the representative point closest to $f(\hat x, u_1, \hat w)$. This process is illustrated in Figure~\ref{fig:disc} for a single point. 
	It is crucial to emphasize that this procedure should be repeated for all possible combinations of $\hat x \in \hat X$, $\hat w \in \hat W$, and $u \in U$. For instance, for any state $\hat x \in \hat X$ and any control input $u_j \in U$, with $j \in \{1, 2, \dots, \bar m\}$, the value $\hat f(\hat x, u_j, \hat w)$ should be computed for every possible disturbance input $\hat w \in \hat W$. This process is iteratively continued until $\hat f(\hat x, u, \hat w)$ is determined for all combinations. We finally note that this method of constructing symbolic models from data introduces an approximation error, which can be captured by $\psi_i$ in condition~\eqref{Eq:8_22}.}

\subsection{\BS{Data-Driven Construction of ASBFs}}\label{sub:data_ASBFs}
Within our data-driven scheme, we consider the structure of ASBFs as
\begin{align}\label{Eq:3}
	\mathcal{V}_i(q_i,x_i,\hat x_i)=\sum_{j=1}^{r} {q}_i^jp_i^j(x_i,\hat x_i), 
\end{align}
where $p_i^j(x_i,\hat x_i), i\in\{1,\dots,M\}$, represent user-defined (nonlinear) basis functions, and  $q_i=[{q}_{i}^1;\ldots;q_i^r] \in \mathbb{R}^r$ denote unknown variables that need to be designed. For instance, if $p_i^j(x_i,\hat{x}_i)$ are selected as monomials with respect to $(x_i,\hat{x}_i)$, then $\mathcal{V}_i$ become a polynomial function. We assume each $\mathcal{V}_i$ is continuously differentiable.
\begin{remark}\label{rem:basis_functions_selection}
	\BS{The template in~\eqref{Eq:3} is general in the sense that users may select arbitrary continuous basis functions $p_i^j(x_i,\hat x_i)$, and the ASBF is constructed as a linear combination of these functions, ensuring that SOP~\eqref{SOP} remains linear in the decision variables $q_i$. For illustration, consider a network of scalar subsystems. For $i=1$, one may choose
		\(
		p_1(x_1,\hat x_1) = [1;\, x_1-\hat x_1;\, (x_1-\hat x_1)^2;\, \ldots;\, (x_1-\hat x_1)^d],
		\)
		where $d$ is a user-defined degree, leading to $\mathcal V_1(q_1,x_1,\hat x_1)=q_1^{\top}p_1(x_1,\hat x_1)$, with $q_1 = [{q}_{1}^1;\, {q}_{1}^2;\, {q}_{1}^3;\, \ldots;\, {q}_{1}^{d+1}]$.
		We note that the basis functions are not limited to polynomials; they may also include nonlinear terms such as $\sin(x_i-\hat x_i)$, $\cos(x_i\hat x_i)$, or $\ln(1+(x_i-\hat x_i)^2)$, depending on the subsystem characteristics. In practice, the selection of basis functions is user-defined and may require iterative refinement, which can be challenging in certain cases, and should be guided by physical intuition about the underlying (black-box) subsystem to fulfill~\eqref{eq: Compositional_Condition}. We note that since ASBFs serve as Lyapunov-like functions that provide only sufficient conditions, failure to satisfy~\eqref{eq: Compositional_Condition} with a chosen template does not imply the nonexistence of an ABF.}
\end{remark}

We commence by reformulating the requisite conditions for constructing ASBFs, as stated in Definition~\ref{Def:41}, into the following robust optimization program (ROP):
\begin{mini!}|s|[2]<b>
	{\left[\mathcal G_i;\mu_i;\varpi_i\right]}{\mu_i + \varpi_i,}{\label{ROP}}\notag
	{}
	\addConstraint{\forall (x_i,u_i,w_i) \in X_i \times U_i \times W_i}{}\notag
	\addConstraint{\forall (\hat x_i, u_i,\hat w_i) \in \hat X_i \times  U_i \times \hat W_i}{}\notag
	\addConstraint{- \mathcal V_i(q_i,x_i,\hat x_i)}{\leq \mu_i}\label{ROP0}
	\addConstraint{\alpha_i{\Vert x_i-\hat x_i\Vert^2} - \mathcal V_i(q_i,x_i,\hat x_i)}{\leq \mu_i}\label{ROP3}
	\addConstraint{\pmb{\mathcal{R}}}{\leq \mu_i}\label{ROP4}
	\addConstraint{\rho_i\Vert w_i-\hat w_i\Vert^2}{\leq \varpi_i}\label{ROP5}
	\addConstraint{\mathcal G_i = [\alpha_i;\gamma_i;\rho_i;\psi_i;{q}_{i}^1;\dots;q_{i}^r]}{}\notag
	\addConstraint{\alpha_i, \psi_i,\varpi_i \!\in\! \mathbb{R}^+,
		\rho_i\!\in\!\mathbb{R}_{0}^{+}, \mu_i \!\in\! \mathbb R,\gamma_i\!\in\! (0,1),}{}\notag
\end{mini!}
with
\begin{align*}
	\pmb{\mathcal{R}} = &\;\mathcal V_i\big(q_i, f_i(x_i,u_i,w_i),\hat f_i(\hat x_i, u_i,\hat w_i)\big) \! -\! \gamma_i\mathcal V_i(q_i,x_i,\hat x_i)\!\\
	&-\!\rho_i\Vert w_i \!-\!\hat w_i\Vert^2 \!-\! \psi_i.
\end{align*}
The optimal value of ROP \eqref{ROP} is denoted as $\mu^*_{R_i} + \varpi_{R_i}^*$\!.

\begin{remark}
	Note that conditions \eqref{Eq:8_21}-\eqref{Eq:8_22} do not originally include $\mu_i$ and $\varpi_i$. We amended these conditions within the ROP~\eqref{ROP} by incorporating the objective value $\mu_i + \varpi_i$ and will subsequently  leverage it in our compositional data-driven technique in \eqref{Con2}. It is apparent that any feasible solution to the ROP with $\mu^*_{R_i}  \leq 0$ confirms the satisfaction of conditions \eqref{Eq:8_21}-\eqref{Eq:8_22}.
\end{remark}
\begin{remark}
	The choice of $\mathcal{V}_i$ in~\eqref{Eq:3} enables the ROP to mainly exhibit convexity with respect to decision variables. The sole instance of scalar bilinearity arises in condition \eqref{ROP4} between $q_i^j$ and $\gamma_i$. To address this, we treat $\gamma_i\in(0,1)$ as an element of a finite set with a cardinality of $l$, denoted as $\gamma_i \in \{\gamma_{i}^1,\dots,\gamma_{i}^l\}$, and a-priori fix the choice of $\gamma_i$ before solving ROP~\eqref{ROP}.
\end{remark}

The proposed ROP in~\eqref{ROP} introduces two primary challenges, rendering it intractable. Firstly, the continuous nature of the spaces $X_i$ and $W_i$ leads to the presence of infinitely many constraints. Furthermore, the unknown maps $f_i$ appear in condition~\eqref{ROP4}, adding another layer of complexity to the problem. Given these significant obstacles, we propose a direct data-driven approach to construct ASBFs without directly solving ROP~\eqref{ROP}.
\BS{Before proceeding, we formalize the following assumption on the availability of measured data, which underpins the subsequent data-driven construction.}
\begin{assumption}\label{assump:measure}
	\BS{For each subsystem $\Upsilon_i = (X_i, U_i, W_i, f_i)$, the output is the identity map of the state variables, implying that all state variables of each subsystem are available as measurements.}
\end{assumption}

\BS{To proceed with the data-driven framework, we collect pairs of two consecutive samples} from trajectories of the unknown subsystems in the form of $((x_i^z, u_i, w_i^z), f_i(x_i^z, u_i, w_i^z))$, \BS{as enabled by Assumption~\ref{assump:measure}}, where $z\in\{1,\dots,\mathcal{N}_i\}$, \BS{with $\mathcal{N}_i$ denoting the number of collected samples}. Subsequently, we compute the maximum distance between $(x_i,u_i,w_i)\in X_i\times U_i \times W_i$ and the collected samples as follows:
\begin{align}
	\label{eq: eps}
	&\sigma_{i} = \max_{(x_i,u_i, w_i)}\min_z\Vert (x_i,u_i, w_i) - ( x^z_i, u_i, w_i^z)\Vert,\\
	& \quad\quad\forall (x_i,u_i,w_i)\in X_i\times U_i \times W_i.\notag
\end{align}\vspace{-0.5cm}
\begin{remark}\label{new8}
	The maximum distance $\sigma_{i}$ in \eqref{eq: eps} can be computed through grid-based partitioning of the space $(X_i\times U_i \times W_i)$. As computation with a grid-based approach reduces to a finite problem, the computational complexity scales linearly with both the number of samples and the number of grid points corresponding to the size of the grid.
\end{remark}

By considering $( x^z_i, u_i, w_i^z)\in X_i\times U_i \times W_i, z\in\{1,\dots,\mathcal N_i\}$, we now propose the following scenario optimization program (SOP):
\begin{mini!}|s|[2]<b>
	{\left[\mathcal G_i;\mu_i;\varpi_i\right]}{\mu_i + \varpi_i,}{\label{SOP}}\notag
	{}
	\addConstraint{\forall z\in\{1,\dots,\mathcal N_i\}}{}\notag
	\addConstraint{\forall (x_i^z,u_i,w_i^z) \in X_i \times U_i \times W_i}{}\notag
	\addConstraint{\forall (\hat x_i,u_i,\hat w_i) \in \hat X_i \times  U_i \times\hat W_i}{}\notag
	\addConstraint{- \mathcal V_i(q_i,x_i^z,\hat x_i)}{\leq \mu_i}\label{SOP0}
	\addConstraint{\alpha_i{\Vert x_i^z-\hat x_i\Vert^2} - \mathcal V_i(q_i,x_i^z,\hat x_i)}{\leq \mu_i}\label{SOP3}
	\addConstraint{\pmb{\mathcal{R}^z}}{\leq \mu_i}\label{SOP4}
	\addConstraint{\rho_i\Vert w_i^z-\hat w_i\Vert^2}{\leq \varpi_i}\label{SOP5}
	\addConstraint{\mathcal G_i = [\alpha_i;\gamma_i;\rho_i;\psi_i;{q}_{i}^1;\dots;q_{i}^r]}{}\notag
	\addConstraint{\alpha_i, \psi_i,\varpi_i \!\in\! \mathbb{R}^+,
		\rho_i\!\in\!\mathbb{R}_{0}^{+}, \mu_i \!\in\! \mathbb R,\gamma_i\!\in\! (0,1),}{}\notag
\end{mini!}
with
\begin{align*}
	\pmb{\mathcal{R}^z} = &\; \mathcal V_i\big(q_i, f_i(x_i^z,u_i,w_i^z),\hat f_i(\hat x_i, u_i,\hat w_i)\big) - \gamma_i\mathcal V_i(q_i,x_i^z,\hat x_i)\\
	& -\rho_i\Vert w_i^z-\hat w_i\Vert^2 - \psi_i.
\end{align*}

It is evident that the proposed SOP in~\eqref{SOP} contains a finite number of constraints, all of which have the same form as those in~\eqref{ROP}. Additionally, the challenge posed by unknown $f_i$ and $\hat{f}_i$ is addressed, as both can be derived from data \BS{(cf.~Section~\ref{sub:data_symbolic})}. The optimal value of SOP is denoted by $\mu_{\mathcal N_i}^* + \varpi_{\mathcal N_i}^*$.
\begin{remark}
	The term $\rho_i\Vert w^z_i-\hat w_i\Vert^2$ will be integrated into the compositional data-driven condition we introduce in the subsequent section (see Theorem~\ref{Thm1}). That was the primary motivation behind incorporating an additional condition into~\eqref{ROP5} and~\eqref{SOP5}, where the objective entails minimizing both $\mu_i$ and $\varpi_i$.
\end{remark}
\begin{remark}\label{rem:feas}
	\BS{While ROP~\eqref{ROP} and SOP~\eqref{SOP} encode the same inequalities yielding ASBFs, the former represents the \emph{model-based} formulation under known subsystem dynamics, whereas the latter serves as its data-driven counterpart, evaluated over sampled data. Both programs are always feasible in the soft sense since $\mu_i \in \mathbb{R}$ acts as a slack variable and can take sufficiently large positive values to satisfy the constraints over compact domains. However, such feasibility does not ensure that the obtained solution yields a valid ABF. In fact, a valid ABF can only be guaranteed if the data-driven compositional condition~\eqref{eq: Compositional_Condition} is satisfied.}
\end{remark}

In the upcoming section, we leverage our proposed SOP and introduce a \emph{newly developed compositional data-driven} condition to construct an ABF for an interconnected network based on the ASBFs of its individual subsystems.

\section{Data-Driven Construction of  ABFs}\label{Guarantee_ROP}
To offer our data-driven findings, we make the assumption that each unknown map $f_i$ is Lipschitz continuous with respect to $(x_i, w_i)$, for any $(\hat x_i, u_i,\hat w_i) \in \hat X_i \times  U_i \times \hat W_i$, an assumption that holds true in numerous practical scenarios \BS{(cf.~Assumption~\ref{assump:Lip_f})}. Given that ASBF $\mathcal V_i$ is continuously differentiable and since our analysis is conducted on the bounded domain $X_i \times U_i \times W_i$, it follows that $\mathcal V_i^*(q_i,f_i(x_i,u_i,w_i),\hat f_i(\hat x_i,u_i,\hat w_i)) - \gamma_i^*\mathcal V_i^*(q_i,x_i,\hat x_i)$, with $\mathcal V_i^*(q_i,\cdot,\cdot) = \mathcal V_i(q_i^*,\cdot,\cdot)$ is also Lipschitz continuous with respect to  $(x_i, w_i)$, for any $(\hat x_i, u_i,\hat w_i) \in \hat X_i \times  U_i \times \hat W_i$, with a Lipschitz constant denoted as $\mathscr{L}^2_{i}$. Likewise, employing the same reasoning, one can demonstrate that $\alpha_i^*{\Vert x_i-\hat x_i\Vert^2} - \mathcal V_i^*(q_i,x_i,\hat x_i)$, is also Lipschitz continuous with respect to $x_i$,  for any $\hat x_i\in \hat X_i$, with a Lipschitz constant $\mathscr{L}^1_{i}$. At a later stage, we present an approach to compute Lipschitz constants $\mathscr{L}^1_{i}, \mathscr{L}^2_{i}$ from the collected data (cf. Algorithm~\ref{Alg:1}).
\begin{remark}\label{rem:why_lip}
	\BS{As our analysis is conducted over a bounded domain, and since $\hat x_i \in \hat X_i$, where $\hat X_i$ is a (finite) discrete set, \emph{i.e.}, $\hat x_i$ only takes discrete values, the function $\Omega_i(q_i, x_i, \hat x_i) = \alpha_i^*\Vert x_i-\hat x_i\Vert^2 - \mathcal V_i^*(q_i, x_i, \hat x_i)$, where $\mathcal V_i^*$ is continuously differentiable, is Lipschitz continuous with respect to $x_i$, for any $\hat x_i \in \hat X_i$, with a Lipschitz constant $\mathscr{L}^1_i$. More precisely, if one aims to compute this constant analytically, one has
		\begin{align}
			\mathcal{L}_{i} = \max_{x_i \in X_i} \Big\Vert \frac{\partial \Omega_i(q_i, x_i, \hat x_i)}{\partial x_i} \Big\Vert. \label{eq:lips}
		\end{align}
		It is evident that $\Vert \frac{\partial \Omega_i(q_i, x_i, \hat x_i)}{\partial x_i} \Vert$ depends not only on $x_i$ but also on $\hat x_i$, which only takes discrete values. This implies that, to obtain a single $\mathscr{L}^1_i$ applicable to all $\hat x_i \in \hat X_i$, one should compute $\mathcal{L}_i$ based on~\eqref{eq:lips} for all possible discrete values of $\hat x_i \in \hat X_i$. 
		Specifically, for each discrete value of $\hat x_i$, one obtains a corresponding value $\mathcal{L}_{i_j}$, where $j \in \{1, 2, \dots, n_{\hat x_i}\}$ and $n_{\hat x_i}$ denotes the cardinality of the set $\hat{X}_i$. Consequently, the unified Lipschitz constant $\mathscr{L}^1_i$ adopted for all $\hat x_i \in \hat X_i$ is computed by
		\(
		\mathscr{L}^1_i = \max\{\mathcal{L}_{i_1}, \mathcal{L}_{i_2}, \dots, \mathcal{L}_{i_{n_{\hat x_i}}}\!\}.
		\)
		The same reasoning can be applied to obtain $\mathscr{L}^2_i$ as well.}
\end{remark}

We now offer the following theorem, as the main result of the work, facilitating the construction of an ABF for an interconnected network with an unknown interconnection topology based on the ASBFs of its individual subsystems, while providing correctness guarantees. \BS{The Lipschitz constants $\mathscr{L}^1_{i}$ and $\mathscr{L}^2_{i}$, required in the following theorem, are computed according to Algorithm~\ref{Alg:1}, in line with the rationale provided in Remark~\ref{rem:why_lip}.}

\begin{theorem}\label{Thm1}
	Consider an interconnected network $\Upsilon = \mathscr{N}\!(\Upsilon_1,\ldots,\Upsilon_M)$, composed of \emph{an arbitrary, a priori undefined} number of agents $\Upsilon\!_i$, with a \emph{fully-unknown interconnection topology}. Consider the $\text{SOP}$ in~\eqref{SOP} for individual subsystems with its corresponding optimal value $\mu_{\mathcal N_i}^* + \varpi_{\mathcal{N}_i}^*$ and solution $\mathcal G_i^* = [\alpha_i^*;\gamma_{i}^*;\rho_i^*;\psi_i^*;{q}^{1*}_{i};\dots;q^{r*}_{i}]$. If 
	\begin{subequations}\label{eq: Compositional_Condition}
	    \BS{\begin{align}
        & \mu_{\mathcal N_i}^* + \mathscr{L}_{i} \sigma_{i} \leq 0, \quad \forall i \in \{1, \ldots, M\}, \label{Con1}\\
		&\sum_{i=1}^M\big(\mu_{\mathcal N_i}^* + \varpi_{\mathcal N_i}^* + \mathscr{L}_{i} \sigma_{i} \big) \leq 0,\label{Con2}
	\end{align}}
	\end{subequations}
	with $\mathscr{L}_{i} = \max\{\mathscr{L}^1_{i},\mathscr{L}^2_{i}\}$, then 
	\begin{align}\label{Lyp}
		\mathcal V(q,x,\hat x):= \sum_{i=1}^M\mathcal V_i^*(q_i,x_i,\hat x_i),
	\end{align}
	is an ABF between $\hat\Upsilon$ and $\Upsilon$, \emph{i.e.,} $\hat \Upsilon \cong_{\mathcal{V}} \!\! \Upsilon$, with a correctness guarantee, satisfying conditions~\eqref{alpha12}-\eqref{alpha1} with $\gamma = \underset{i}\max\{\gamma_i^*\}, \alpha = \underset{i}\min\{\alpha_i^*\},\text{ and } \psi = \sum_{i=1}^M\psi_i^*.$
\end{theorem}

\begin{proof}
	We first show that under \BS{condition~\eqref{Con1}}, the ABF $\mathcal V$ in~\eqref{Lyp} satisfies condition~\eqref{alpha12} for the whole range of the state set,
	\emph{i.e.,}
	\begin{align}\label{new}
		\alpha\Vert x-\hat x\Vert^2 \leq \mathcal V(q,x,\hat x), \quad \quad \forall x \in X,\:\forall\hat x\in \hat X\!.
	\end{align}
	Given the proposed form of the ABF, one has:
	\begin{align*}
		&\alpha\Vert x-\hat x\Vert^2 - \mathcal V(q,x,\hat x) = \alpha\Vert x-\hat x\Vert^2 - \sum_{i=1}^M \mathcal V_i^*(q_i,x_i,\hat x_i)\\
		& = \alpha\Vert [x_1;\dots;x_{M}]-[\hat x_1;\dots;\hat x_{M}]\Vert^2 - \sum_{i=1}^M \mathcal V_i^*(q_i,x_i,\hat x_i)\\
		& \;\BS{= \alpha\sum_{i=1}^M\Vert x_i-\hat x_i\Vert^2} - \sum_{i=1}^M\mathcal V_i^*(q_i,x_i,\hat x_i)\\&= \sum_{i=1}^M\big(\alpha\Vert x_i-\hat x_i\Vert^2 - \mathcal V_i^*(q_i,x_i,\hat x_i)\big).
	\end{align*}
	By defining $\alpha = \underset{i}\min\{\alpha_i^*\}$, one has
	\begin{align*}
		&\alpha\Vert x-\hat x\Vert^2 - \mathcal V(q,x,\hat x)\\ &\leq\sum_{i=1}^M \big(\min_i\{\alpha_i^*\}\Vert x_i-\hat x_i\Vert^2 - \mathcal V_i^*(q_i,x_i,\hat x_i)\big)\\&\leq \sum_{i=1}^M \big(\alpha_i^*\Vert x_i-\hat x_i\Vert^2 - \mathcal V_i^*(q_i,x_i,\hat x_i)\big).
	\end{align*}
	Let us define $z^*:= \arg \, \underset{z}{\min} \Vert x^z_i - x_i\Vert.$  By incorporating the term $\sum_{i=1}^M(\alpha_i^*{\Vert  x^{z^*}_i-\hat x_i\Vert^2} - \mathcal V_i^\ast(q_i,x^{z^*}_i,\hat x_i))$ through addition and subtraction, we have
	\begin{align*}
		\alpha\Vert x-\hat x\Vert^2 - \mathcal V(q,x,\hat x)&\leq\sum_{i=1}^M\big(\alpha_i^*\Vert x_i-\hat x_i\Vert^2 - \mathcal V_i^*(q_i,x_i,\hat x_i)\\&\;\;-\alpha_i^*{\Vert x^{z^*}_i\!-\hat x_i\Vert^2} + \mathcal V_i^\ast(q_i, x^{z^*}_i,\hat x_i)\\
		& \;\;+ \alpha_i^*{\Vert x^{z^*}_i\!-\hat x_i\Vert^2} -  \mathcal V_i^\ast(q_i, x^{z^*}_i,\hat x_i)\big).
	\end{align*}
	Given that $\alpha_i^*{\Vert x_i-\hat x_i\Vert^2} - \mathcal V_i^*(q_i,x_i,\hat x_i)$ is Lipschitz continuous with respect to $x_i$,  for any $\hat x_i\in \hat X_i$, with a Lipschitz constant $\mathscr{L}^1_{i}$, since $\min_z\Vert x_i- x^z_i\Vert \leq \min_z\Vert (x_i,u_i, w_i) - ( x^z_i, u_i, w_i^z)\Vert$, $\mathscr{L}_{i} = \max\{\mathscr{L}^1_{i},\mathscr{L}^2_{i}\}$, and as per~\eqref{eq: eps}, we have
	\begin{align*}
		&\alpha\Vert x-\hat x\Vert^2 - \mathcal V(q,x,\hat x)\\
		&\leq\sum_{i=1}^M\big(\mathscr{L}^1_{i} \, \min_z\Vert x_i\!-\! x^z_i\Vert\!+\! \alpha_i^*{\Vert  x^{z^*}_i\!-\hat x_i\Vert^2} - \mathcal V_i^\ast(q_i, x^{z^*}_i\!\!,\hat x_i)\big)\\
		&\leq\sum_{i=1}^M\big(\mathscr{L}^1_{i} \, \min_z\Vert (x_i,\!u_i,\! w_i) \!-\! ( x^z_i,\! u_i,\! w_i^z)\Vert\!+\! \alpha_i^*{\Vert  x^{z^*}_i\!\!-\!\hat x_i\Vert^2} \!\\&\hspace{1.1cm}-\! \mathcal V_i^\ast(q_i, x^{z^*}_i\!,\!\hat x_i)\big)\\
		&\leq \sum_{i=1}^M\!\big(\mathscr{L}^1_{i}\!\!\! \max_{(x_i,u_i,w_i)}\!\!\min_z\!\Vert (x_i,\!u_i,\! w_i) \!-\! ( x^z_i,\! u_i,\! w_i^z)\Vert\!\\&\hspace{1.1cm}+\! \alpha_i^*{\Vert x^{z^*}_i\!\!\!-\!\hat x_i\Vert^2} \!-\! \mathcal V_i^\ast(q_i, x^{z^*}_i\!\!,\hat x_i)\big)\\
		& \leq \sum_{i=1}^M\big(\mathscr{L}_{i} \sigma_i\!+\! \alpha_i^*{\Vert x^{z^*}_i\!-\hat x_i\Vert^2} \!-\! \mathcal V_i^\ast(q_i, x^{z^*}_i\!,\hat x_i)\big).
	\end{align*}
	According to condition~\eqref{SOP3} of SOP, we have
	\begin{align*}
		\alpha\Vert x-\hat x\Vert^2 - \mathcal V(q,x,\hat x)\leq\sum_{i=1}^M\big(\mu_{\mathcal N_i}^* + \mathscr{L}_{i} \sigma_{i}\big).
	\end{align*}
	Given the proposed \BS{condition in~\eqref{Con1}}, one can conclude that
	\begin{align*}
		 &\forall x \in X,\: \hat x\!\in\! \hat{X}\!,
		&\alpha\Vert x-\hat x\Vert^2 - \mathcal V(q,x,\hat x)\leq 0.
	\end{align*}
	\BS{We define $\mu^*_{R_i}$ in ROP~\eqref{ROP} as $\mu^*_{R_i} = \mu_{\mathcal N_i}^* + \mathscr{L}_{i} \sigma_{i}$, for any $i \in \{1, \ldots, M\}$. This expression is negative for all subsystems $\Upsilon_{\! i}$ under the condition in~\eqref{Con1}.} Under similar reasoning steps, one can show that 
	\begin{align*}
		\mathcal V(q,x,\hat x) \geq 0, \quad \quad \forall x \in X,\:\forall\hat x\in \hat X\!.
	\end{align*}
	We now proceed with showing that under condition~\eqref{Con2}, $\mathcal V$ also satisfies condition~\eqref{alpha1} for the whole range of the state set, \emph{i.e.,}
	\begin{align*}
		\mathcal V(q,f(x, u), \hat f(\hat x, u)) \!-\! \gamma\mathcal V(q,x,\hat x) - \psi \leq 0,  \forall x \in X, \: \hat x\in \hat X.
	\end{align*}
	Given the proposed form of the ABF in~\eqref{Lyp} and taking into consideration that $ \psi = \sum_{i=1}^M\psi_i^*$, one can write down $\mathcal V$ based on ASBFs of individual subsystems as:
	\begin{align*}
		&\mathcal V\big(q,f(x,\!u), \hat f(\hat x,\! u)\big) \!-\! \gamma\mathcal V(q,\!x,\!\hat x) \!-\! \psi\\&=\!\sum_{i=1}^M\!\!\big(\mathcal V_i^*\!(q_i,\!f_i(x_i,\!u_i,\!w_i),\!\hat f_i(\hat x_i,\!u_i,\!\hat w_i)) \!-\!  \gamma\mathcal V_i^*\!(q_i,\!x_i,\!\hat x_i)\!-\!\psi_i^*\big).
	\end{align*}
	By defining $\gamma = \underset{i}\max\{\gamma_i^*\}$, one has
	\begin{align*}
		&\mathcal V\big(q,f(x,u), \hat f(\hat x, u)\big)\! - \!\gamma\mathcal V(q,x,\hat x)-\psi\\
		&=\sum_{i=1}^M\big(\mathcal V_i^*(q_i,f_i(x_i,u_i,w_i),\hat f_i(\hat x_i,u_i,\hat w_i)) \\&\hspace{1.1cm}- \underset{i}\max\{\gamma_i^*\}\mathcal V_i^*(q_i,x_i,\hat x_i)-\psi_i^*\big)\\
		&\leq\sum_{i=1}^M\big(\mathcal V_i^*(q_i,f_i(x_i,u_i,w_i),\hat f_i(\hat x_i,u_i,\hat w_i))\\&\hspace{1.1cm} -  \gamma_i^*\mathcal V_i^*(q_i,x_i,\hat x_i)-\psi_i^*\big).
	\end{align*}
	Let $$z^*:= \arg \, \underset{z}{\min} \Vert (x_i,u_i,w_i)- ( x^z_i,u_i, w^z_i)\Vert.$$  By incorporating the terms $$\sum_{i=1}^M\!(\mathcal V_i^\ast\!(q_i,f_i(x^{z^*}_i,u_i, w^{z^*}_i),\hat f_i(\hat x_i,u_i,\hat w_i)) \!-\! \gamma_i^*\mathcal V_i^\ast\!(q_i, x^{z^*}_i,\hat x_i))$$ through addition and subtraction, one has
	\begin{align*}
		&\mathcal V\big(q,f(x,u), \hat f(\hat x, u)\big)- \gamma\mathcal V(q,x,\hat x)-\psi\\
		&\leq \sum_{i=1}^M\big(\mathcal V_i^*(q_i,f_i(x_i,u_i,w_i),\hat f_i(\hat x_i,u_i,\hat w_i)) -  \gamma_i^*\mathcal V_i^*(q_i,x_i,\hat x_i)\\
		& ~~~~~~-\!\!\,\psi_i^*\\
		&~~~~~~~\!\! -\!\!\mathcal V_i^\ast\!(q_i,f_i(x^{z^*}_i,u_i, w^{z^*}_i),\hat f_i(\hat x_i,u_i,\hat w_i)\!)\!+\!\!\, \gamma_i^*\mathcal V_i^\ast\!(q_i, x^{z^*}_i\!\!,\hat x_i)\\
		& ~~~~~~+\!\!\mathcal V_i^\ast(q_i,f_i(x^{z^*}_i\!\!,u_i, w^{z^*}_i),\hat f_i(\hat x_i,u_i,\hat w_i)\!)\!-\! \gamma_i^*\mathcal V_i^\ast(q_i, x^{z^*}_i\!\!,\hat x_i)\!\big).
	\end{align*}
	Given that $\mathcal V_i^*(q_i,f_i(x_i,u_i,w_i),\hat f_i(\hat x_i,u_i,\hat w_i)) - \gamma_i^*\mathcal V_i^*(q_i,x_i,\hat x_i)$ is Lipschitz continuous with respect to  $(x_i, w_i)$, for any $(\hat x_i, u_i,\hat w_i) \in \hat X_i \times  U_i \times \hat W_i$, with a Lipschitz constant $\mathscr{L}^2_{i}$, since $\mathscr{L}_{i} = \max\{\mathscr{L}^1_{i},\mathscr{L}^2_{i}\}$, and as per~\eqref{eq: eps}, we have
	\begin{align*}
		&\mathcal V\big(q,f(x,u), \hat f(\hat x, u)\big)-\gamma\mathcal V(q,x,\hat x)-\psi\\
		&\leq\sum_{i=1}^M\big(\mathscr{L}^2_{i} \, \min_z\Vert (x_i,u_i,w_i)- ( x^z_i, u_i, w^z_i)\Vert\\
		&\hspace{1.1cm} + \mathcal V_i^\ast(q_i,f_i(x^{z^*}_i\!\!,u_i, w^{z^*}_i),\hat f_i(\hat x_i,u_i,\hat w_i)\!) \\&\hspace{1.1cm}- \gamma_i^*\mathcal V_i^\ast(q_i, x^{z^*}_i\!\!,\hat x_i) - \psi_i^*\big)\\
		&\leq \sum_{i=1}^M\big(\mathscr{L}^2_{i} \!\!\!\max_{(x_i,u_i, w_i)}\!\! \min_z\Vert (x_i,u_i,w_i)- ( x^z_i, u_i, w^z_i) \Vert\\
		&\hspace{1.1cm} + \mathcal V_i^\ast(q_i,f_i(x^{z^*}_i\!\!,u_i, w^{z^*}_i),\hat f_i(\hat x_i,u_i,\hat w_i)\!) \\&\hspace{1.1cm}- \gamma_i^*\mathcal V_i^\ast(q_i, x^{z^*}_i\!\!,\hat x_i) - \psi_i^*\big)\\
		& \leq \sum_{i=1}^M\big(\mathscr{L}_{i} \sigma_{i} \!+\! \mathcal V_i^\ast(q_i,f_i(x^{z^*}_i\!\!,u_i, w^{z^*}_i\!),\hat f_i(\hat x_i,u_i,\hat w_i)\!) \\&\hspace{1.1cm}- \gamma_i^*\mathcal V_i^\ast(q_i, x^{z^*}_i\!\!,\hat x_i) \!-\! \psi_i^*\big).
	\end{align*}
	According to conditions~\eqref{SOP4} and~\eqref{SOP5} of SOP, we have
	\begin{align*}
		&\mathcal V\big(q,f(x,u), \hat f(\hat x, u)\big)\! - \!\gamma\mathcal V(q,x,\hat x) - \psi\\&\leq \sum_{i=1}^M\big(\mu_{\mathcal N_i}^* + \varpi_{\mathcal N_i}^* + \mathscr{L}_{i} \sigma_i \big).
	\end{align*}
	Under the proposed condition in~\eqref{Con2}, one can conclude that
	\begin{align*}
		&\forall x \in X, \: \hat x\in \hat X,\\
		&\mathcal V(q,f(x, u), \hat f(\hat x, u)) \leq  \gamma\mathcal V(q,x,\hat x) + \psi.
	\end{align*}
	Then, $\mathcal V (q, x, \hat{x})$ in the form of~\eqref{Lyp} is an ABF between $\hat{\Upsilon}$ and $\Upsilon$, \textit{i.e.}, $\hat \Upsilon \cong_{\mathcal{V}} \!\! \Upsilon$, with a correctness guarantee, concluding the proof.
\end{proof}

\begin{algorithm}[t!]
	\caption{Data-driven ABF construction with provable guarantees}\label{Alg:2}
	\begin{center}
		\begin{algorithmic}[1]
			\REQUIRE 
			Degree of ASBF $\mathcal{V}_i$
			\FOR{all unknown subsystems $\Upsilon\!_{i} , i\in\{1,\dots,M\}$,}
			\STATE 
			Gather a set of two-consecutive sampled data from trajectories of each unknown subsystem $\Upsilon_{\! i}$ 
			\STATE
			Construct symbolic model $\hat \Upsilon_{\! i}$ according to Definition~\ref{def:sym} and \BS{Section~\ref{sub:data_symbolic}}
			\STATE
			Solve SOP~\eqref{SOP} to obtain ASBF $\mathcal{V}_i^\ast$, and $\gamma_i^\ast$, $\alpha_i^\ast$, $\psi_i^\ast$, $\rho_i^\ast$, $\varpi_i^\ast$, $\mu_i^\ast$
			\STATE
			Compute Lipschitz constant $\mathscr{L}_{i}$ according to Algorithm~\ref{Alg:1} \label{line4}
			\STATE
			Compute $\sigma_i$ according to~\eqref{eq: eps} and Remark~\ref{new8}
			\ENDFOR
			\IF{\BS{$\mu_{\mathcal N_i}^* \! + \! \mathscr{L}_{i} \sigma_{i} \! \leq \! 0$ and} $\sum_{i=1}^M \! \big(\mu_{\mathcal N_i}^* \! + \! \varpi_{\mathcal N_i}^* \! + \! \mathscr{L}_{i} \sigma_{i} \big) \! \leq \! 0$}
			\STATE
			$\mathcal V = \sum_{i=1}^M\mathcal V_i^\ast$ is an ABF between $\hat \Upsilon$ and $\Upsilon$ satisfying conditions~\eqref{alpha12}-\eqref{alpha1} with $\gamma = \underset{i}\max\{\gamma_i^*\}$, $\alpha = \underset{i}\min\{\alpha_i^*\}$, and $\psi = \sum_{i=1}^M\psi_i^*$
			\ELSE
			\STATE
			Return to Step 1, collect a larger set of data\footnotemark, or change the degree of ASBF $\mathcal{V}_i$, then repeat Steps 2-7
			\ENDIF
			\ENSURE
			ABF $\mathcal{V}$ between $\hat \Upsilon$ and $\Upsilon$ with provable guarantees
		\end{algorithmic}
	\end{center}
\end{algorithm}
\footnotetext{Essentially, a larger set of data implies a smaller $\sigma_i$, which can potentially ease the satisfaction of~\eqref{eq: Compositional_Condition}.}

	\begin{algorithm}[t]
	\caption{Estimation of Lipschitz constant $\mathscr{L}_{i}$ using data}\label{Alg:1}
	\begin{center}
		\begin{algorithmic}[1]
			\REQUIRE 
			ASBF $\mathcal{V}_i^\ast, \alpha_i^*, \gamma_i^*$
			\STATE 
			Choose $\bar{\mathscr{K}}, \tilde{\mathscr{K}} \in \mathbb{N}$ and $\mathscr{B} \in \mathbb{R}^+$
			\FOR{$\forall \hat x \in \hat X$, $\forall u \in U$, $\forall \hat w \in \hat W$}
			\FOR{$\theta \gets 1$ to $\tilde{\mathscr{K}}$}
			\FOR{$z \gets 1$ to $\bar{\mathscr{K}}$}
			\STATE
			Collect  sampled pairs $((x_i^z,w_i^z), (x_i^{z^\prime},w_i^{z^\prime}))$ such that $\Vert  (x_i^z,w_i^z) - (x_i^{z^\prime},w_i^{z^\prime}) \Vert \leq \mathscr{B}$
			\STATE
			Compute the slope $\kappa_i^z = \dfrac{\Vert  \mathscr{G}(x_i^z,u_i,w_i^z, \hat x_i,\hat w_i ) - \mathscr{G}(x_i^{z^\prime},u_i,w_i^{z^\prime}, \hat x_i,\hat w_i )  \Vert}{\Vert   (x_i^z,w_i^z ) - (x_i^{z^\prime},w_i^{z^\prime}) \Vert}$ with $\mathscr{G}(x_i^z,u_i,w_i^z, \hat x_i,\hat w_i) = \mathcal V_i^\ast(q_i,f_i(x_i^z,u_i,w_i^z),\hat f_i(\hat x_i,u_i,\hat w_i)) - \gamma_i^*\mathcal V_i^\ast(q_i,x_i^z,\hat x_i)$, ($\mathscr{G}(x_i^{z^\prime},u_i,w_i^{z^\prime}, \hat x_i,\hat w_i)$ is obtained similarly)
			\ENDFOR
			\STATE
			Obtain the maximum slope as $\mathcal{M}_\theta^i = \max \{\kappa_i^1,\dots,\kappa_i^{\bar{\mathscr{K}}}\}$ 
			\ENDFOR
			\ENDFOR
			\STATE
			Employing the \textit{Reverse Weibull distribution} on $\mathcal{M}_1^i, \dots, \mathcal{M}_{\tilde{\mathscr{K}}}^i$, which provides location, scale, and shape parameters, designate the \textit{location parameter} as an estimate of $\mathscr{L}^2_{i}$
			\STATE Repeat Steps 1-11 with the following $\mathscr{G}$ to estimate $\mathscr{L}^1_{i}$: \begin{align*}
				\mathscr{G}(x_i^z, \hat x_i) = \alpha_i^*{\Vert x_i^z-\hat x_i\Vert^2} - \mathcal V_i^*(q_i,x_i^z,\hat x_i)
			\end{align*}
			\ENSURE $\mathscr{L}_{i} = \max\{\mathscr{L}^1_{i},\mathscr{L}^2_{i}\}$ 
		\end{algorithmic}
	\end{center}
\end{algorithm}

\begin{remark}\label{rem:Lip_why}
	\BS{The Lipschitz constants $\mathscr{L}^1_i$ and $\mathscr{L}^2_i$ in our framework allow us to provide \emph{out-of-sample} performance guarantees, \emph{i.e.}, generalizing the solution obtained using finite data to unseen data. To discuss their effect, we note that the larger the value of $\mathscr{L}_{i} = \max\{\mathscr{L}^1_{i},\mathscr{L}^2_{i}\}$, the more challenging it becomes to satisfy~\eqref{eq: Compositional_Condition}, as higher Lipschitz constants intuitively indicate more aggressive system behavior. Moreover, we note that while our framework does not rely on model-based small-gain reasoning, it exhibits an analogous compensatory property: in~\eqref{Con2}, even if the quantity $\big(\mu_{\mathcal N_i}^* + \varpi_{\mathcal N_i}^* + \mathscr{L}_i\sigma_i\big)$ is positive for certain subsystems, its effect can be compensated by other subsystems for which this term is sufficiently negative, thereby preserving the overall compositional validity. Finally, unlike classical small-gain reasoning, which explicitly depends on the network topology, our data-driven approach implicitly captures the influence of interconnections within the collected data used to solve SOP~\eqref{SOP}.}
\end{remark}

We provide Algorithm~\ref{Alg:2}  that summarizes required steps  for establishing an ABF between the interconnected network and its symbolic model using data with provable guarantees.

{\bf Controller design process.} Due to the finite nature of symbolic models, algorithmic techniques from computer science can be leveraged to synthesize controllers that enforce complex logical specifications. This process entails designing local controllers for the symbolic models $\hat\Upsilon_i$, for $i \in \{1,\dots,M\}$, and then refining them back to the original subsystems $\Upsilon_i$ \BS{using the similarity relations established by the ASBFs. In particular, once the symbolic models and their corresponding ASBFs are constructed and the compositional condition~\eqref{eq: Compositional_Condition} is verified, the controller synthesis process becomes specification-independent. This means that, by leveraging software tools developed within the formal methods community, \emph{e.g.}, \texttt{SCOTS}~\cite{rungger2016scots}, one can synthesize controllers for the constructed symbolic models to enforce different complex specifications of interest, and subsequently transfer the obtained results back to the original continuous-space systems while guaranteeing the satisfaction of the same specifications. This feature contrasts with most conventional control methodologies, where the controller design process is tightly coupled with the specification of interest. This \emph{decoupling} between abstraction and specification represents a key flexibility of our approach compared with other control design methods.}
Finally, the controller for an interconnected network $\Upsilon = \mathscr{N}(\Upsilon_1,\ldots,\Upsilon_M)$ will be a vector, where each component corresponds to a controller for individual subsystems $\Upsilon_i$.

To verify the proposed data-driven compositional condition~\eqref{eq: Compositional_Condition}, knowledge of the Lipschitz constant $\mathscr{L}_{i}$ is required. To estimate it for each subsystem $\Upsilon_{\! i}$ using a finite dataset, we employ the fundamental result of \cite{wood1996estimation} and offer Algorithm~\ref{Alg:1} for its computation. Under this algorithm, the convergence of the estimated value $\mathscr{L}_{i}$ to its true value is assured in the limit, as supported by the following lemma~\cite{wood1996estimation}.
\begin{lemma}\label{lemma:Lip}
	The estimated $\mathscr{L}_{i}$ converges to its actual value if and only if $\mathscr{B}$ tends to zero, while $\bar{\mathscr{K}}$ and $\tilde{\mathscr{K}}$ approach infinity.
\end{lemma}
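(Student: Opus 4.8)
The plan is to invoke the extreme-value-theoretic estimation scheme underlying Algorithm~\ref{Alg:1}, which is imported from \cite{wood1996estimation}. Recall that in that scheme the bounded domain of the relevant map $h$ (i.e., one of the two maps whose Lipschitz constants $\mathscr{L}^1_i,\mathscr{L}^2_i$ enter Theorem~\ref{Thm1}) is covered by $\bar{\mathscr{K}}$ subregions, each of diameter at most $\mathscr{B}$; within each subregion one draws $\tilde{\mathscr{K}}$ independent pairs of sample points and forms the associated difference quotients (slopes); the estimate of the Lipschitz constant over that subregion is the location parameter of a reverse Weibull distribution fitted to the per-subregion slope maxima via the standard maximum-likelihood procedure, and $\mathscr{L}_i$ is reported as the largest such location parameter over all subregions.

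First I would fix one subregion $\mathcal S$ and analyze the distribution $F_{\mathcal S}$ of the slope $\|h(x)-h(x')\|/\|x-x'\|$ for a pair $(x,x')$ drawn from $\mathcal S$. Since $h$ is Lipschitz on the compact set $\mathcal S$, $F_{\mathcal S}$ has bounded support with right endpoint $L_{\mathcal S}:=\mathrm{Lip}(h|_{\mathcal S})$; moreover, because $h$ is continuously differentiable (as established for $\mathcal V_i^*$, $f_i$ in Section~\ref{Guarantee_ROP}), $F_{\mathcal S}$ lies in the domain of attraction of the reverse Weibull family, with the Weibull endpoint coinciding with $L_{\mathcal S}$. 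Hence, by the Fisher--Tippett--Gnedenko theorem, the maximum of the $\tilde{\mathscr{K}}$ sampled slopes, suitably normalized, converges in distribution to a reverse Weibull as $\tilde{\mathscr{K}}\to\infty$, and the maximum-likelihood location estimate is a consistent estimator of $L_{\mathcal S}$; this is exactly where $\tilde{\mathscr{K}}\to\infty$ is needed.

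Next I would refine the covering: as $\bar{\mathscr{K}}\to\infty$ together with $\mathscr{B}\to 0$, every subregion $\mathcal S$ shrinks to a point $x^\star$, and by continuity of $\nabla h$ one has $L_{\mathcal S}=\mathrm{Lip}(h|_{\mathcal S})\to\|\nabla h(x^\star)\|$; taking the maximum over subregions then yields $\max_{\mathcal S}L_{\mathcal S}\to\sup_x\|\nabla h(x)\|=\mathrm{Lip}(h)$, the true Lipschitz constant. Combining the two limits with a standard uniformity (diagonal) argument to interchange them gives convergence of $\mathscr{L}_i$ to its actual value under the three stated conditions. For the converse I would argue that relaxing any one condition breaks consistency: if $\mathscr{B}\not\to 0$ then the configuration attaining $\mathrm{Lip}(h)$ is diluted inside a fixed-diameter cell and $\max_{\mathcal S}L_{\mathcal S}$ need not reach $\mathrm{Lip}(h)$; if $\tilde{\mathscr{K}}\not\to\infty$ the finite-sample slope maximum inside a cell stays bounded away in expectation from $L_{\mathcal S}$, biasing the Weibull location fit downward; and on a compact domain $\bar{\mathscr{K}}\not\to\infty$ prevents the cells from being simultaneously refined, reducing to the first case.

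The main obstacle I anticipate is the first step: rigorously certifying that $F_{\mathcal S}$ belongs to the reverse-Weibull domain of attraction and that its right endpoint equals $L_{\mathcal S}$. This hinges on the sampling distribution placing sufficient mass near pairs realizing near-maximal slope, together with a local Taylor expansion of $h$ around the maximizing configuration; both are precisely the regularity hypotheses ($C^1$ maps on a bounded domain) invoked in Section~\ref{Guarantee_ROP}, so the argument closes, but this tail analysis is the technically delicate component and the reason the result is stated here as a citation to \cite{wood1996estimation}.
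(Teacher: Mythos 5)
The paper does not actually prove Lemma~\ref{lemma:Lip}: it is imported verbatim from \cite{wood1996estimation}, so there is no in-paper argument to compare against. Judged on its own terms, your reconstruction has a structural problem: you describe an estimator that is not the one in Algorithm~\ref{Alg:1}. There is no covering of the domain by $\bar{\mathscr{K}}$ subregions of diameter $\mathscr{B}$, and no maximum over subregion-wise location parameters. In Algorithm~\ref{Alg:1}, $\mathscr{B}$ only bounds the separation of the two points in each sampled pair (pairs are drawn from the whole domain), $\bar{\mathscr{K}}$ is the batch size whose maximum slope $\mathcal{M}_\theta^i$ constitutes one approximate draw from the limiting extreme-value law, and $\tilde{\mathscr{K}}$ is the number of such batch maxima to which a \emph{single} reverse Weibull is fitted. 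Consequently the two limits play roles opposite to the ones you assign them: $\bar{\mathscr{K}}\to\infty$ is what triggers the Fisher--Tippett--Gnedenko convergence of the batch maximum to a reverse Weibull whose finite right endpoint is the essential supremum of the slope distribution, while $\tilde{\mathscr{K}}\to\infty$ gives consistency of the fitted location parameter. Your spatial-refinement step (each subregion shrinks to a point, $L_{\mathcal S}\to\|\nabla h(x^\star)\|$, and the maximum over subregions tends to $\sup_x\|\nabla h(x)\|$) therefore establishes convergence of a different estimator than the one the lemma is about.

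Second, the ``only if'' half is not established. Your converse consists of three heuristic sentences, and at least one is suspect: for a $C^1$ map on a convex compact domain the supremum of $\|h(x)-h(x')\|/\|x-x'\|$ over \emph{all} pairs already equals the Lipschitz constant, so it is not obvious that $\mathscr{B}\to 0$ is genuinely necessary for the right endpoint of the slope distribution to be correct --- its role in \cite{wood1996estimation} is to control the tail behaviour needed for the reverse-Weibull domain-of-attraction hypothesis, which is a more delicate point than the ``dilution'' argument you give. If you want a self-contained proof rather than the citation the paper relies on, the route is: (i) show that the slope distribution of $\mathscr{B}$-close pairs lies in the reverse-Weibull domain of attraction with right endpoint equal to the true Lipschitz constant (this is where the $C^1$ regularity and the sampling density enter); (ii) apply Fisher--Tippett--Gnedenko as $\bar{\mathscr{K}}\to\infty$; (iii) invoke consistency of the location-parameter fit as $\tilde{\mathscr{K}}\to\infty$; and (iv) for necessity, exhibit explicit failures when each limit is dropped. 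Steps (i) and (iv) are precisely the ones your sketch does not close.
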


\begin{remark}
	To estimate the Lipschitz constant $\mathscr{L}_i$ in Algorithm~\ref{Alg:1}, one needs to determine unknown coefficients $q_i$, which requires solving the SOP~\eqref{SOP}. To avoid the need for subsequent verification of condition~\eqref{eq: Compositional_Condition}, one can assume a certain range for unknown coefficients $q_i$ and estimate the Lipschitz constant $\mathscr{L}_i$ before solving SOP~\eqref{SOP}. These established ranges should be then enforced during the solution of SOP~\eqref{SOP}.
\end{remark}

\section{\BS{Limitations and Potential Future Directions}}
\BS{Our findings also have some limitations, which we elaborate on in this section. As mentioned in Assumption~\ref{assump:Lip_f}, our framework is applicable only when the transition map $f_i(x_i(k),u_i(k),w_i(k))$ is Lipschitz continuous with respect to $(x_i, w_i)$. Even though we emphasize that this assumption is standard in the relevant literature and holds in many practical scenarios, we acknowledge that it may restrict the applicability of our framework in certain cases; for instance, our framework cannot be employed when the transition map $f_i(x_i(k),u_i(k),w_i(k))$ is discontinuous~\cite{cortes2008discontinuous}. Moreover, as established in Lemma~\ref{lemma:Lip}, the estimates of the Lipschitz constants $\mathscr{L}^1_{i}$ and $\mathscr{L}^2_{i}$, obtained through Algorithm~\ref{Alg:1}, converge to their true values \emph{asymptotically}. Developing a \emph{quantitative characterization} for estimating these constants from a finite amount of data remains a promising direction for future research.}

\BS{Another potential limitation in practice pertains to the data requirements for solving the proposed SOP~\eqref{SOP}. Specifically, we overlay a grid on the sample space to ensure uniform coverage during data collection and collect pairs of two consecutive samples from trajectories of the unknown subsystems in the form $((x_i^z, u_i, w_i^z), f_i(x_i^z, u_i, w_i^z))$, where $z \in \{1,\dots,\mathcal{N}_i\}$. This requirement necessitates initializing each subsystem \emph{multiple times} to cover the whole range of the sample space, which can be substantial in certain cases, thereby making the data collection process challenging in practice. In practical implementations, such data can be efficiently generated using high-fidelity simulators, which considerably simplify the data acquisition process. Despite this, to mitigate this limitation, a complementary direction arises from another active line of research on direct data-driven control based on the so-called Willems et al.’s fundamental lemma~\cite{willems2005note}; see~\cite{de2019formulas,martin2023guarantees}. Within this framework, a single trajectory of input–state or input–output data from each subsystem suffices, thereby substantially easing the procedure of data collection~\cite{samari2024abstraction}. However, it is important to note that this alternative line of research generally confines the class of systems to \emph{specific structural forms} (such as input-affine or solely polynomial systems), whereas our framework only requires the transition map to be Lipschitz continuous (see the third case study with non-affine, highly complex dynamics). Hence, our framework remains applicable to a broader class of systems without imposing the structural constraints typically present in the aforementioned approaches. Moreover, within this alternative line of work, one should typically select an extensive library of functions capable of representing the true terms in the system dynamics, which demands considerable prior insight into the system and can be restrictive in practice. In contrast, our framework does not rely on such assumptions.}

\BS{As mentioned in Assumption~\ref{assump:measure}, our proposed framework requires direct measurability of all state variables of each subsystem. We note that most direct data-driven approaches have been developed under the ubiquitous assumption of direct measurability of all state variables (see \emph{e.g.,}~\cite{makdesi2023data, hashimoto2022learning, lavaei2022data, lavaei2023symbolic}). Nevertheless, by employing a technique similar to that used in~\cite{jahanshahi2023data}, our framework could be extended to accommodate \emph{input-output} data instead of input-state data. In such an extension, the framework would be enhanced to account for both the unknown transition map $f_i(x_i(k),u_i(k),w_i(k))$ and the \emph{unknown output map} $h_i(x_i(k))$, where $y_i(k) = h_i(x_i(k))$ denotes the measured output. While this modification could potentially improve the practicality of the framework, such an extension requires assuming the existence of an estimator that can estimate the state variables of each subsystem with a known upper bound on the estimation error (see~\cite[Assumption~1]{jahanshahi2023data}). This bound would then need to be incorporated into the analysis and the proposed guarantee.}

\BS{While the proposed framework cannot be applied to networks composed of subsystems with time-varying dynamics or subject to stochasticity \emph{e.g.}, process noise, one could build upon it to develop methods for constructing symbolic models of such networks. When the dynamics of each subsystem, and consequently, of the entire interconnected network, are time-varying, the ASBFs between symbolic models of subsystems and their corresponding subsystems should also be time-varying. This leads, in turn, to time-varying ABFs between symbolic models of networks and the networks themselves. Such a setting necessitates modifications to the $\epsilon$-approximate alternating bisimulation relation in~\eqref{relation} and the corresponding $\epsilon$ in~\eqref{error}. More importantly, SOP~\eqref{SOP} must also be adapted to incorporate these time-varying ASBFs. In the case of the existence of process noise in the dynamics of subsystems, the expectation operator with respect to the underlying noise should be incorporated into conditions~\eqref{Eq:8_22},~\eqref{alpha1}, and~\eqref{SOP4} in SOP~\eqref{SOP}. Subsequently, one should estimate empirical approximations of these expectations and, to preserve formal rigor, quantify the deviation between the expected and empirical values, \emph{e.g.}, using Chebyshev’s inequality~\cite{hernandez2001chebyshev}, to establish a confidence level. Moreover, in this case, the closeness guarantee would itself be probabilistic, introducing \emph{two layers of probability}; the inner one reflecting the probabilistic closeness guarantee, and the outer one corresponding to the confidence associated with the empirical approximation of the expected values. It is also worth highlighting that, in the stochastic case, the symbolic model is generalized to a \emph{finite Markov decision process} (MDP), which is typically represented by a stochastic matrix. Each component of this matrix denotes the probability of transitioning from one discrete cell to another, given the applied discrete inputs and disturbances, and accounting for the inherent randomness within the system.}

\BS{In the current setting, the interactions between neighboring subsystems are considered as the sole source of disturbance in the dynamics of each subsystem, denoted by $w_i$. Consequently, the present framework does not provide robustness against \emph{external} sources of disturbance. Nonetheless, the current setting can be extended to account for additional sources of disturbance within the dynamics of each subsystem, and consequently, within the network as a whole. To do so, the dynamics of each dt-CS $\Upsilon\!_i$, given in~\eqref{Eq_1a}, should be reformulated as
	\[
	\Upsilon\!_i\!:x_i(k+1) = f_i(x_i(k),u_i(k),w_i(k),d_i(k)),\quad k\in \mathbb N,
	\]
	where $d_i\!:\mathbb{N} \rightarrow D_i$ denotes the \emph{external} disturbance signal, and $D_i$ represents the external disturbance input set of the dt-CS $\Upsilon\!_i$. Since $d_i$ represents an exogenous signal whose source lies outside the network, it should explicitly appear in~\eqref{Eq_1a1} (unlike $w_i$, which captures interconnection effects and disappears in the overall network representation), as
	\[
	\Upsilon\!:x(k+1) = f(x(k),u(k),d(k)), \quad k\in\mathbb N,
	\]
	where $d \coloneq [d_1;\dots;d_M]$, with $M$ denoting the number of subsystems, and $D:=\prod_{i=1}^{M}D_i$. Subsequently, analogous to the continuous internal disturbance input set $W_i$, which is discretized as $\hat{W}_i := \big\{\hat{w}_i^j \mid j=1, \ldots, n_{\hat{w}_i}\big\}$, the same procedure should be applied to the external disturbance input, resulting in the discrete set $\hat{D}_i := \big\{\hat{d}_i^j \mid j=1, \ldots, n_{\hat{d}_i}\big\}$. Accordingly, the symbolic model $\hat{f}_i: \hat{X}_i \times U_i \times \hat{W}_i \times \hat{D}_i \rightarrow \hat{X}_i$ is defined as
	\begin{align*}
		\hat{f}_i(\hat{x}_i, u_i, \hat{w}_i, \hat{d}_i) = \mathcal{Q}_i(f_i(\hat{x}_i, u_i, \hat{w}_i, \hat{d}_i)),
	\end{align*}
	where $\mathcal{Q}_i$ denotes the quantization map satisfying~\eqref{EQ:4}. As is evident, to incorporate the external disturbance $d_i$, it is necessary to assume that its corresponding set is known, which is a standard assumption in the related literature.}

\BS{Now, it is required to refine Definitions~\ref{Def:41} and~\ref{cbc}, as well as Theorem~\ref{thm-J19}. More concretely, in Definitions~\ref{Def:41} and~\ref{cbc}, conditions~\eqref{Eq:8_22} and~\eqref{alpha1}, respectively, should be modified as follows:
	\begin{itemize}
		\item $\forall x_i \in X_i,\forall \hat x_i \in \hat X_i,\forall u_i \in U_i, \forall w_i \in W_i,\forall \hat w_i \in \hat W_i, \forall d_i \in D_i,\forall \hat d_i \in \hat D_i\!:$
		\begin{align*}
			&\mathcal V_i\big(f_i(x_i,u_i,w_i,d_i), \hat f_i(\hat x_i,u_i,\hat w_i,\hat d_i)\big)\\
			&\leq \gamma_i \mathcal V_i(x_i,\hat x_i) + \rho_i \Vert w_i - \hat w_i \Vert^2 + \varkappa_i \Vert d_i - \hat d_i \Vert^2 + \psi_i,
		\end{align*}
		for some $\psi_i  \in \mathbb{R}^{+}$, $\rho_i, \varkappa_i\in\mathbb{R}_{0}^{+}$, and $\gamma_i\in (0,1)$,
	\end{itemize}
	and
	\begin{itemize}
		\item $\forall x \in X, \forall \hat x \in \hat X, \forall u\in U, \forall d \in D, \forall \hat d \in \hat D\!:$
		\[
		\mathcal V\big(f(x,u,d), \hat f(\hat x, u, \hat d)\big)
		\leq \gamma \mathcal V(x,\hat x) + \varkappa \Vert d - \hat d \Vert^2 + \psi,
		\]
		for some $\psi \in \mathbb{R}^{+}$, $\varkappa \in \mathbb{R}_{0}^{+}$, and $\gamma\in (0,1)$, where $\hat D := \prod_{i=1}^{M} \hat D_i$.
	\end{itemize}
	It is important to emphasize that conditions~\eqref{Eq:8_21} and~\eqref{alpha12} remain unchanged.
	Consequently,~\eqref{relation} and~\eqref{error} in Theorem~\ref{thm-J19} should be modified as well. To do so, let us assume that there exists a $\nu \in \mathbb{R}^{+}$ such that $\Vert d - \hat d \Vert^2 \leq \nu$ for all $d \in D$ and $\hat d \in \hat D$. Then,~\eqref{relation} should be replaced by
	\[
	\mathscr{R} := \{(x, \hat{x}) \in X \times \hat{X} \mid \mathcal{V}(x, \hat{x}) \leq \BS{\max\{\bar{\varkappa} \nu, \bar{\psi}\}}\},
	\]
	with
	\begin{align}
		\epsilon = \Big(\frac{\max\{\bar{\varkappa} \nu, \bar{\psi}\}}{\alpha}\Big)^{\!\!\tfrac{1}{2}}\!\!,\label{eq:R3_Rel}
	\end{align}
	where
	\begin{align}
		\bar{\varkappa} = \frac{(1 + \eta_2)\eta_3}{(1 - \gamma)\eta_1}\varkappa, 
		\quad \bar{\psi} = \frac{(1 + \eta_2)\eta_3}{(1 - \gamma)(\eta_3 - 1)\eta_1\eta_2}\psi,\label{eq:tmp_barvarkappa}
	\end{align}
	for any $\eta_1, \eta_2 \in (0, 1)$ and $\eta_3 \in (1, 2)$. To demonstrate this, we have
	\begin{align*}
		\mathcal V\big(f(x,u, d), \hat f(\hat x, u, \hat d)\big) & \leq \gamma \mathcal V(x,\hat x) + \varkappa \Vert d - \hat d \Vert^2 + \psi\\& \leq \max\{\bar\gamma\mathcal V(x,\hat x), \bar{\varkappa} \nu, \bar\psi \},
	\end{align*}
	with $\bar{\varkappa}$ and $\bar{\psi}$ as in~\eqref{eq:tmp_barvarkappa}, and
	\(
	\bar\gamma = 1 - (1 - \eta_1)(1 - \gamma),
	\)
	for any $\eta_1, \eta_2 \in (0, 1)$ and $\eta_3 \in (1, 2)$, which shows the correctness of~\eqref{eq:R3_Rel}. We also have
	\(
	\alpha\Vert x-\hat x\Vert ^2\leq \mathcal V(x,\hat x) \leq \max\{\bar{\varkappa} \nu, \bar{\psi}\}.
	\)
	Thus, we get
	\[
	\Vert x-\hat x\Vert \leq \Big(\frac{\max\{\bar{\varkappa} \nu, \bar{\psi}\}}{\alpha}\Big)^{\!\frac{1}{2}} = \epsilon.
	\]
	ROP~\eqref{ROP} and SOP~\eqref{SOP} should be modified accordingly; however, performing these revisions needs thorough investigation. We leave this task for future research.
}

\section{Case Studies}\label{Compu}
We illustrate our data-driven findings by applying them to \BS{three} benchmarks: \emph{(i)} a building temperature network, \emph{(ii)} a vehicle network, and \emph{(iii)} \BS{a highly nonlinear network with nonlinear interconnection constraints.} \BS{All} cases involve unknown networks with an \emph{arbitrary, a priori undefined} number of agents and unknown interconnection topologies. We demonstrate that our compositional data-driven approach can provide safety guarantees for \BS{these three} unknown networks. All simulations were performed in Matlab on a MacBook Pro (Apple M2 Max with 32GB memory), with the SOP~\eqref{SOP} solved using Mosek\footnote{Mosek license was obtained from https://www.mosek.com.} solver.
 
 \subsection{Room temperature network}
 We exemplify our data-driven findings over a room temperature network comprising $> 10000$ rooms\footnote{Note that our compositional condition in \eqref{eq: Compositional_Condition} does not require prior knowledge of the number of subsystems; it can be verified for any \emph{arbitrary, a priori undefined} number of subsystems. This is more flexible compared to the model-based small-gain approach~\cite{swikir2019compositional} or its data-driven version \cite{lavaei2023symbolic}, which requires both the interconnection topology and the exact number of subsystems to satisfy the traditional compositional condition.}, characterized by unknown mathematical models and an unknown interconnection topology. Within this network, each room is equipped with a cooler and used to preserve specialized medications at low temperatures~\cite{meyer2017compositional}.

The dynamics of the network follow the subsequent difference equation:
 	\begin{align*}
 		\Upsilon : x(k+1) = A x(k) + \BS{\phi} T_c u(k) + \BS{\lambda} T_E,
 	\end{align*}
 	where the matrix $A$ is defined by its diagonal entries as $a_{ii} = 1 - 2 \BS{\beta} - \BS{\lambda} - \BS{\phi} u_i(k)$, for $i \in \{1, \ldots, M\}$, while the off-diagonal entries could be either $\BS{\beta}$ or zero, depending on the unknown interconnection topology. Here, the parameters $\BS{\beta}$, $\BS{\lambda}$, and $\BS{\phi}$ represent thermal exchange coefficients, corresponding to the heat transfer between adjacent rooms $i \pm 1$ and $i$, the interaction between room $i$ and the external environment, and the cooling effect within room $i$, respectively. Additionally, the variables $x(k) = \left[x_1(k); \ldots ; x_{M}(k)\right]$ and $T_E = \left[T_{e_1}; \ldots; T_{e_{M}}\right]$, where each external temperature $T_{e_i}$ is set to $-2^{\circ} C$ for all $i \in \{1, \ldots, M\}$. The cooler temperature is specified as $T_c = 5^{\circ} \mathrm{C}$.
 	To isolate the dynamics of each room individually, we have
 	\begin{align*}
 		\Upsilon_i: x_i(k+1) = &\; a_{ii} x_i(k) + \BS{\beta} w_i(k) + \BS{\phi} T_c u_i(k) + \BS{\lambda} T_{e_i}.
 	\end{align*}
 	 The network $\Upsilon$ is thus expressed as $\Upsilon = \mathscr{N}\!\left(\Upsilon_1, \ldots, \Upsilon_{M}\right)$. It is assumed that both the model for each room and the interconnection topology are unknown.
 
 The primary objective is to  construct ASBFs and their symbolic models by solving SOP~\eqref{SOP} and compositionally design an ABF, derived from data, via the result of Theorem~\ref{Thm1}. Consequently, we leverage the data-driven symbolic models to design controllers within the control input set $U_i = \{0,  1\}$ that regulate the temperature of each room (\emph{i.e.,} $x_i$) within a predetermined safe set $X_i = [-0.5, \: 0.5]$ while ensuring guaranteed correctness. 
 
 We set the structure of our ASBF as $\mathcal{V}_i(q_i, x_i, \hat{x}_i) = q_{i}^1(x_i - \hat{x}_i)^6 + q_{i}^2(x_i - \hat{x}_i)^4 + q_{i}^3(x_i - \hat{x}_i)^2 + q_{i}^4$. We follow required steps in Algorithm \ref{Alg:2} by collecting trajectories and computing $\sigma_{i} = 0.05$. Then, by solving SOP~\eqref{SOP} for all $i \in \{1, \dots, M\}$, the corresponding decision variables are obtained as
 \begin{align}
 	&\mathcal{V}_i^\ast(q_i,x_i,\hat{x}_i) = 0.4949(x_i - \hat{x}_i)^6 -0.25 (x_i - \hat{x}_i)^4\notag \\&\hspace{2.2cm}+ 0.001(x_i - \hat{x}_i)^2 + 0.8,\label{new99}\\\notag
 	& \mu_i^\ast = -0.0496, \quad \varpi_i^\ast = 10^{-6},
 \end{align}
 with a fixed $\gamma_{i}^\ast = 0.985$. We now compute $\mathscr{L}_{i} = \max\{0.9675,0.7359\}=0.9675$ according to Algorithm \ref{Alg:1}. Given that $\mu_i^\ast + \varpi_i^\ast + \mathscr{L}_{i}\sigma_i = -0.0012 \leq 0$ for all $i \in \{1, \dots, M\}$, according to Theorem~\ref{Thm1}, $\mathcal V(q, x, \hat x)= \sum_{i = 1}^{M}\mathcal V_i^\ast(q_i,x_i,\hat{x}_i)$, with $\mathcal V_i^*$ as in \eqref{new99}, is an ABF between the unknown room temperature network and its symbolic model. Figure~\ref{fig:room1} visually demonstrates that the data-driven ASBF is non-negative in the whole range of state space, while satisfying condition~\eqref{Eq:8_21}. The execution time for this example was approximately 18 seconds, with a memory consumption of about $315$ Megabits.
 
 Leveraging the constructed data-driven symbolic models, we now \emph{compositionally} design a controller for the interconnected network such that the controller forces the temperature of each room to be inside the safe set $X_i = [-0.5, \: 0.5]$. To this end, we initially synthesize a local controller for each room using its symbolic model via \texttt{SCOTS}~\cite{rungger2016scots}. We then refine this controller back over each unknown original room using data-driven ASBFs. Consequently, the controller for the room temperature network is a vector, with each component being a controller for individual rooms. Closed-loop state trajectories of three arbitrary rooms and their corresponding control inputs can be seen in Figure~\ref{fig:room}. As illustrated, the synthesized controller can force trajectories of unknown rooms to stay in the safe set $X_i = [-0.5, \: 0.5]$.
 
 \BS{It is worth noting that while the control input in Figure~\ref{fig:room} resembles a bang-bang controller, this similarity arises solely from the formulation of the case study, in which the control input set is defined as $U_i = \{0,1\}$, corresponding to a cooler that is either on or off. Consequently, the control input switches abruptly between these two extreme values. However, if the control input set were defined differently, \emph{e.g.}, $U_i = \{0, 0.25, 0.5, 1\}$, the resulting control input would not exhibit bang-bang behavior, which our setting readily accommodates.}

\begin{figure}[t!]
	\centering
	\subfloat[\centering Satisfaction of $\mathcal{V}_i^\ast(q_i,x_i,\hat{x}_i) \geq 0$.]{{\resizebox{0.8\linewidth}{!}{\begin{tikzpicture}
					\begin{axis}[colorbar, grid = major,
						xlabel = $x_i$, ylabel = $\hat x_i$, zlabel = $\mathcal{V}_i^* \left(q_i{, } x_i{, }\hat x_i\right)$]
						\addplot3
						[surf,faceted color=blue, view={30}{30},
						samples=40, scatter, mark = *,
						domain=-0.5:0.5,y domain=-0.5:0.5]
						{0.4949*(x-y)^6 - 0.25*(x-y)^4 + 0.001*(x-y)^2+0.8};					
					\end{axis}
	\end{tikzpicture}} }}%

	\subfloat[\centering Satisfaction of $\alpha_i^\ast\Vert x_i-\hat x_i\Vert^2 - \mathcal V_i^\ast(q_i,x_i,\hat x_i)\leq 0$.]{{\resizebox{0.8\linewidth}{!}{\begin{tikzpicture}
					\begin{axis}[colorbar, grid = major,
						xlabel = $x_i$, ylabel = $\hat x_i$, zlabel = condition~\eqref{Eq:8_21}]
						\addplot3
						[surf,faceted color=blue, view={30}{30},
						samples=40, scatter, mark = *,
						domain=-0.5:0.5,y domain=-0.5:0.5]
						{(x-y)^2 - (0.4949*(x-y)^6 - 0.25*(x-y)^4 + 0.001*(x-y)^2+0.8)};
						
					\end{axis}
	\end{tikzpicture}} }}%
	\caption{Data-driven ASBF is non-negative in the whole range of state space (a), while satisfying condition~\eqref{Eq:8_21} (b).}%
	\label{fig:room1}%
\end{figure}

 \begin{figure}[t!]
	\centering
	\subfloat[\centering Temperature of each room \label{subfig1}]{{\includegraphics[width=0.8\linewidth]{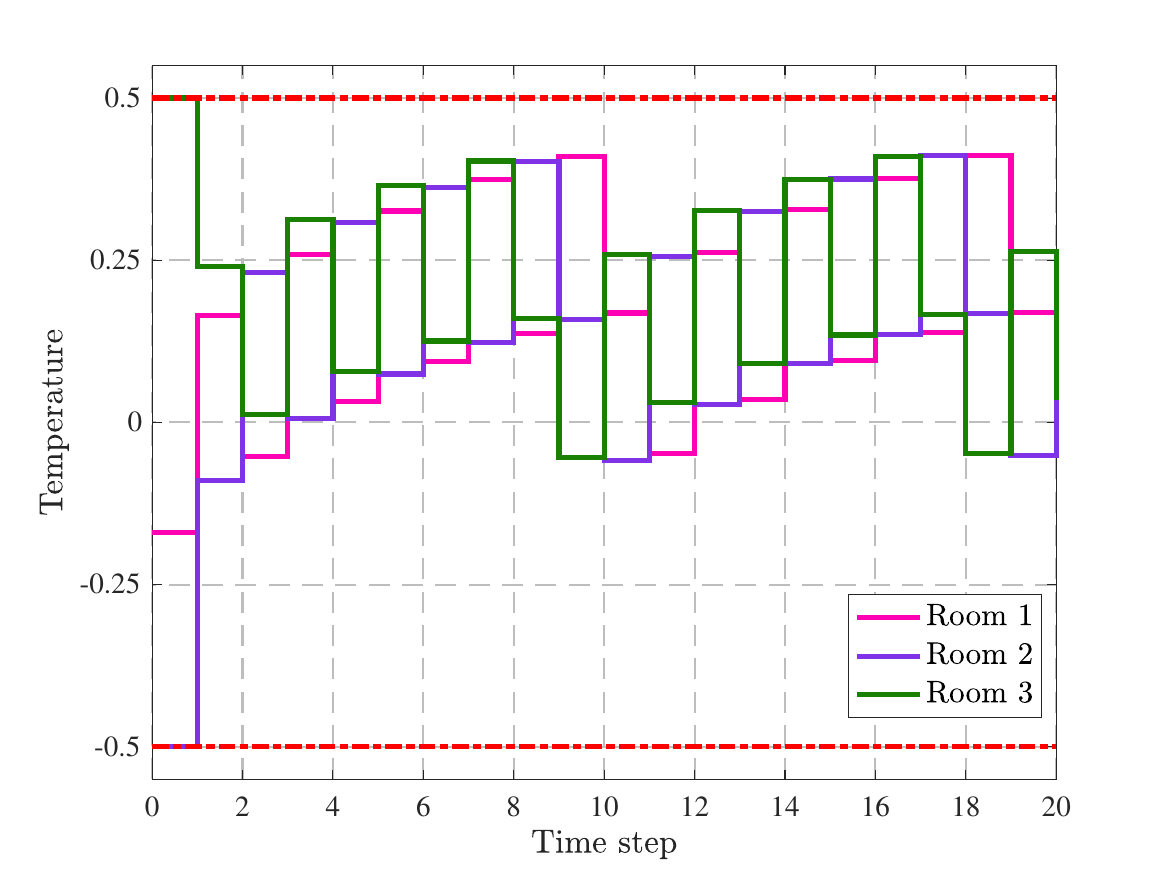}}}%
	
	\subfloat[\centering Control input of each room]{{\includegraphics[width=0.8\linewidth]{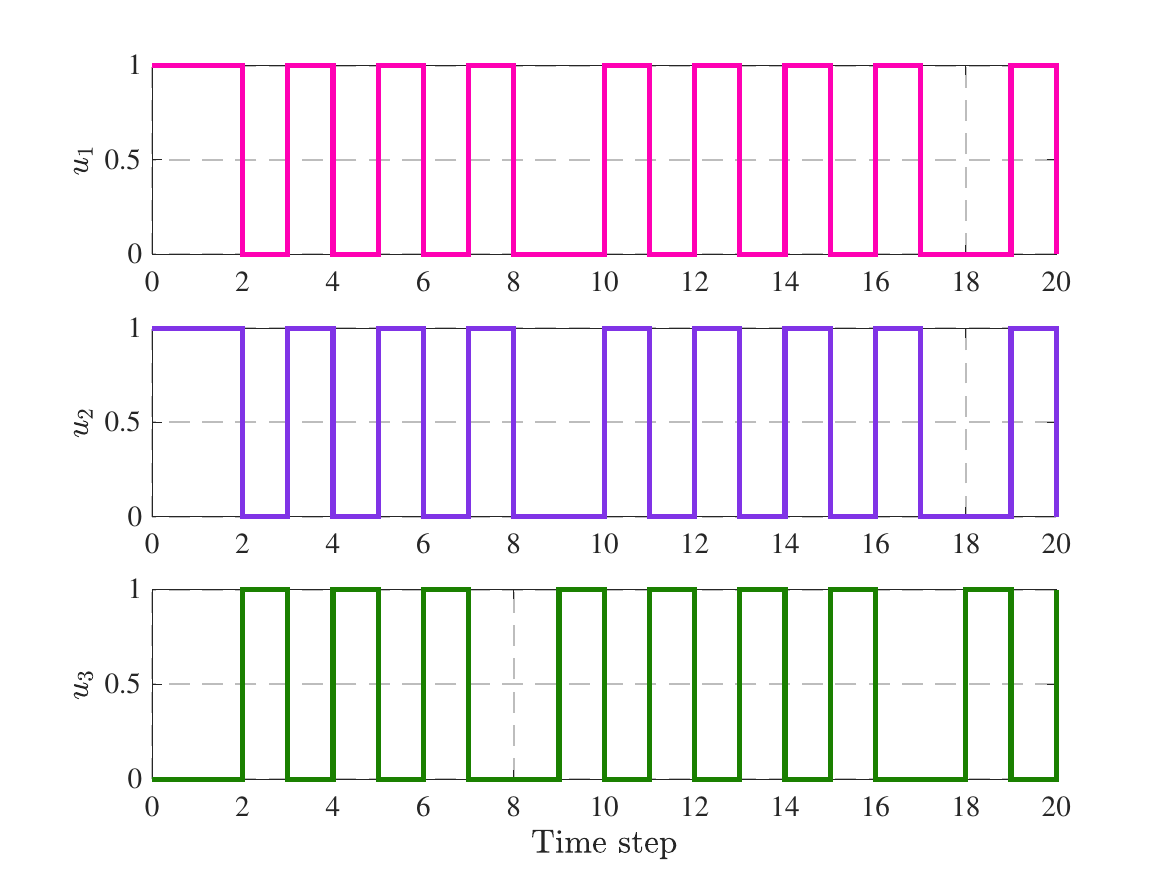} }}%
	\caption{Closed-loop state trajectories of three arbitrary rooms by designing  controllers via their data-driven symbolic models. While two out of the three initial conditions are positioned at the boundaries of the safe set, the designed controllers perfectly maintain the trajectories within the safe set.}%
	\label{fig:room}%
\end{figure}

\subsection{Vehicle network} 
As the second case study, we illustrate the efficacy of our data-driven findings for a vehicle network comprising $> 5000$ vehicles, characterized by unknown mathematical models and an unknown interconnection topology~\cite{sadraddini2017provably}.
 
 The state evolution within the interconnected network is described by the following formulation:
 	\begin{align*}
 		\Upsilon: x(k+1) = A x(k) + u(k),
 	\end{align*}
 	where the matrix $A$ includes diagonal blocks, denoted by $\hat{A}$, and off-diagonal blocks, depending on the interconnection topology, represented as $A_{i(i-1)} = A_w$ or zero matrix for $i \in \{2, \ldots, M\}$, defined as follows:
 	\[
 	\hat{A} = \begin{bmatrix} 1 & -1 \\ 0 & 1 \end{bmatrix}\!\!, \quad A_w = \begin{bmatrix} 0 & \tau \\ 0 & 0 \end{bmatrix}\!\!,
 	\]
 	where $\tau = 0.005$ denotes the strength of interconnection. We also define the vectors $x(k) = \left[x_1(k); \ldots ; x_{M}(k)\right]$ and $u(k) = \left[u_1(k); \ldots ; u_{M}(k)\right]$.
 	For each $i \in \{1, \ldots, M\}$, the state evolution of an individual vehicle is given by:
 	\[
 	\Upsilon_i: x_i(k+1) = \hat{A} x_i(k) + u_i(k) + A_w w_i(k).
 	\]
    Consequently, the network structure $\Upsilon$ is represented as $\Upsilon = \mathscr{N}\!\left(\Upsilon_1, \ldots, \Upsilon_{M}\right)$.
 
 The state of each vehicle is defined as $x_i \coloneqq \left[d_i;v_i\right], i \in \{1, \dots, M\}$, where the distance between vehicle $i$ and its preceding vehicle $i - 1$ is represented by $d_i$. Our objective is to compositionally synthesize controllers $u_i = [u_{i_1};  u_{i_2}]$, with $u_{i_1}, u_{i_2} \in \{-1, -0.8, -0.6, \dots, 1\}$, that force the network's states to stay inside the safe region $X_i = [0,\, 1] \times [-0.15, \, 0.55]$.
 
 \begin{figure}[t!]
 	\centering
 	\subfloat[\centering First scenario \label{subfig2}]{{\includegraphics[width=0.77\linewidth]{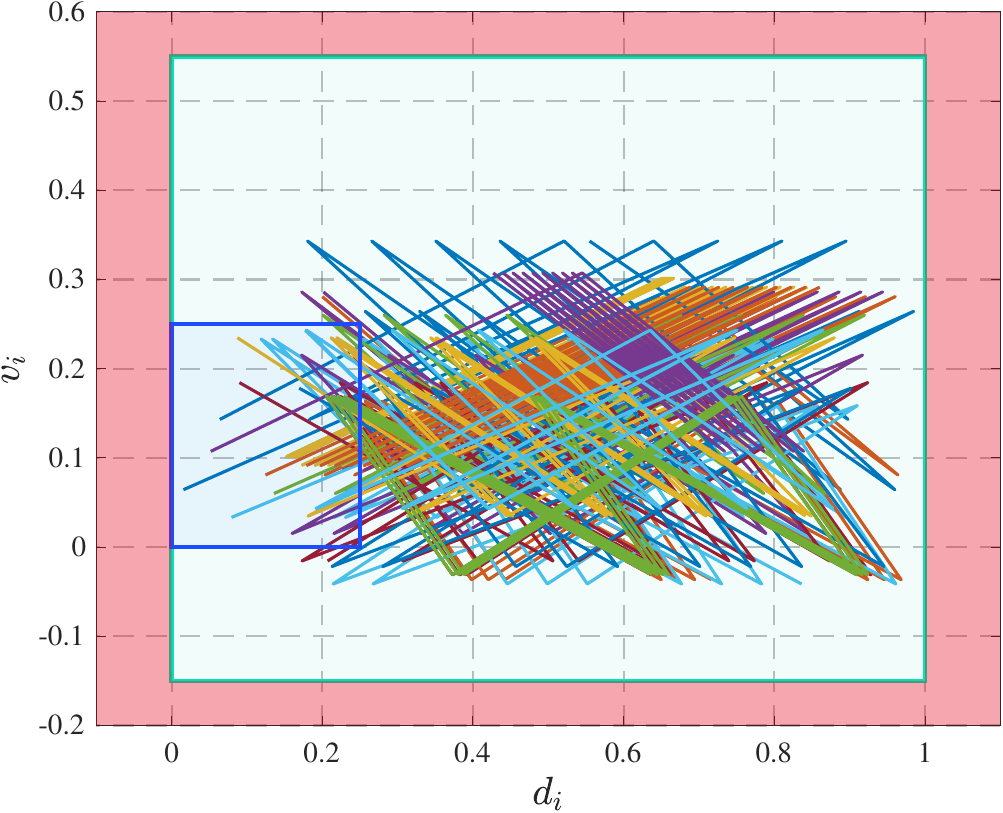} }}%
 	
 	\subfloat[\centering Second scenario \label{subfig3}]{{\includegraphics[width=0.8\linewidth]{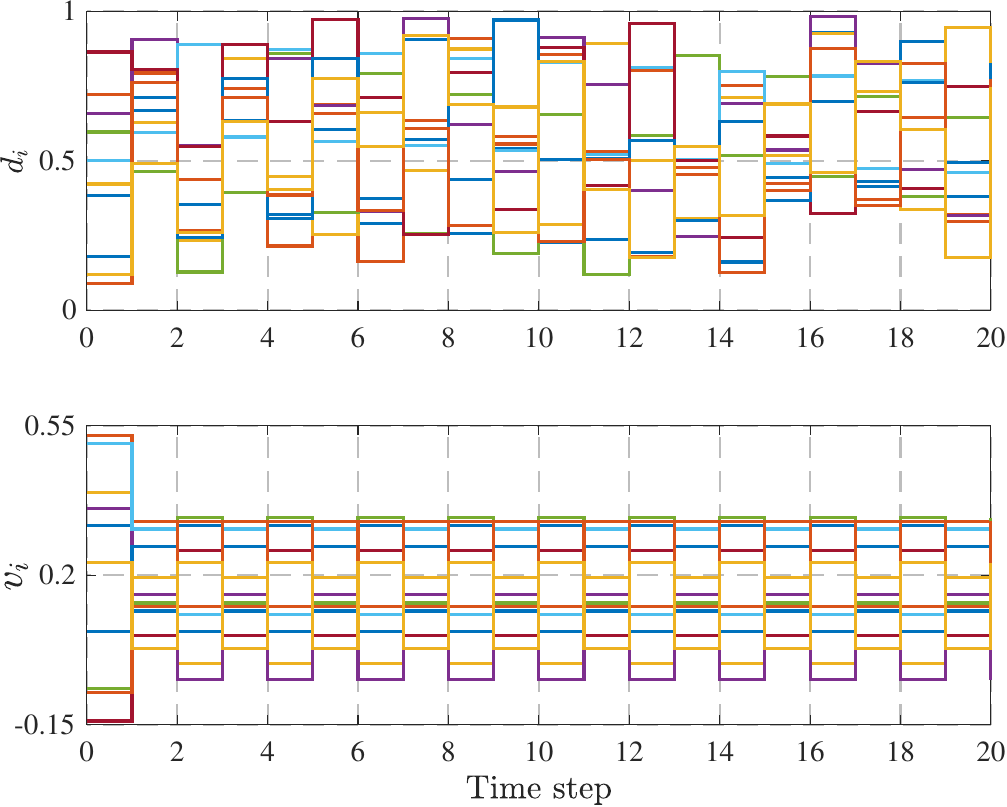} }}%
 	\caption{In the first scenario, we assume that each vehicle's trajectories start from a specific initial set \legendsquare{initt!25}. As shown in Figure~\ref{subfig2}, even though this initial set shares a border with the unsafe set \legendsquare{unsafe!35}, all trajectories remain within the safe set \legendsquare{state!35} and never enter the unsafe set. In the second scenario, we assume that trajectories of vehicles can start anywhere inside the safe set. As illustrated in Figure~\ref{subfig3}, all trajectories never leave the safe set under any circumstances. While only $10$ vehicles are selected for demonstration purposes, we observed that safety is maintained for all $M$ vehicles. \label{fig:platoon}}%
 \end{figure}
 We set the structure of our ASBF as $\mathcal{V}_i(q_i, x_i, \hat x_i) = \sum_{j = 1}^{5}q_i^j (d_i - \hat d_i)^{(5 - j)}+ \sum_{j = 6}^{10}q_i^j (v_i - \hat v_i)^{(10 - j)}$. After collecting data and  computing $\sigma_{i} = 0.3$, we solve SOP~\eqref{SOP} for all $i \in \{1, \dots, M\}$ and obtain
 \begin{align}\notag
 	\mathcal{V}_i^\ast(q_i,x_i,\hat{x}_i) = \: & 0.008(d_i - \hat d_i)^4 - 0.0203(d_i - \hat d_i)^3 \\&+ 0.0235(d_i - \hat d_i)^2 + 0.008(d_i - \hat d_i)\notag\\
 	& +0.3458(v_i - \hat v_i)^4 + 0.008(v_i - \hat v_i)^3\notag\\&+0.008(v_i - \hat v_i)^2-0.01(v_i - \hat v_i)+2,\label{new999}\\\notag
 	& \hspace{-2.3cm}  \mu_i^\ast = -0.7717, \quad \varpi_i^\ast = 10^{-6},
 \end{align}
 with a fixed $\gamma_{i}^\ast = 0.99$. We now calculate $\mathscr{L}_{i} = \max\{1.5753,0.5359\}=1.5753$ according to Algorithm \ref{Alg:1}.  Since $\mu_i^\ast + \varpi_i^\ast + \mathscr{L}_{i}\sigma_i = -0.2991 \leq 0$, according to Theorem~\ref{Thm1}, $\mathcal V^\ast(q, x, \hat x)= \sum_{i = 1}^{M}\mathcal V_i^\ast(q_i,x_i,\hat{x}_i)$, with $\mathcal V_i^*$ as in~\eqref{new999}, is an ABF between the vehicle network and its symbolic model.
 
 We now proceed with compositionally designing a controller for the vehicle network using data-driven symbolic models with the objective of keeping each vehicle's states within the predefined safe set $X_i = [0,\, 1] \times [-0.15, \, 0.55]$. Figure~\ref{fig:platoon} demonstrates that the synthesized controller can constrain the trajectories of representative vehicles within the designated safe region, whether they start from an initial set that shares a boundary with the unsafe set (\emph{i.e.,} Figure~\ref{subfig2}) or originate from arbitrary initial conditions within the safe set (\emph{i.e.,} Figure~\ref{subfig3}).
 
 \BS{\subsection{Highly nonlinear network}\label{subsec:HN_Case}}
  \BS{To better illustrate the advantage of our \emph{direct} data-driven technique, we now consider an complicated network, featuring \emph{non-affine, highly complex} dynamics and \emph{nonlinear} interconnection constraints, which render the use of system identification approaches particularly challenging. This network consists of more than $1000$ interconnected subsystems, denoted by $\Upsilon = \mathscr{N}(\Upsilon_1, \ldots, \Upsilon_{M})$. Both the interconnection structure and the mathematical models of the subsystems are assumed to be \emph{unknown}. The dynamics of each subsystem are given by
 \begin{align*}
 	\Upsilon_i : x_i(k + 1) = & \; 0.6 \tanh(x_i(k)) + 0.06 \sin(x_i(k) + u_i(k))\\& + 0.03 u_i^3(k) + w_i(k),
 \end{align*}
 where \[
 	w_i(k) \! = \! 0.16\big(x_{i-1}^3(k) + x_{i+1}^3(k)\big) + \ln\!\big(1 + (x_{i-1}(k)x_{i+1}(k))^2\big), 
 	\]
	with $x_0 \coloneq x_M$ and $x_{M+1} \coloneq x_1$.
 }

 \BS{The primary objective is to construct ASBFs and their corresponding symbolic models by solving SOP~\eqref{SOP}, and subsequently to compositionally derive a data-driven ABF based on the result of Theorem~\ref{Thm1}. Utilizing the obtained data-driven symbolic models, we then design controllers within the control input set $U_i = \{-2, -1.5, \dots, 1.5, 2\}$ that regulate each state $x_i$ within the predefined safe set $X_i = [-0.25, \, 0.25]$, while maintaining formal correctness guarantees.}
 
 \BS{To achieve this, we define the structure of the ASBF as 
 	\(
 	\mathcal{V}_i(q_i, x_i, \hat{x}_i) = q_{i}^1(x_i - \hat{x}_i)^6 + q_{i}^2(x_i - \hat{x}_i)^4 + q_{i}^3(x_i - \hat{x}_i)^2 + q_{i}^4.
 	\)
 	Following the procedure outlined in Algorithm~\ref{Alg:2}, we collect data and compute $\sigma_i = 0.05$. By solving SOP~\eqref{SOP} for each $i \in \{1, \dots, M\}$, we obtain $\mu_i^\ast = -0.0233$, $\varpi_i^\ast = 10^{-6},$ and
 	\begin{align}
 		&\mathcal{V}_i^\ast(q_i,x_i,\hat{x}_i) =  0.5(x_i - \hat{x}_i)^6 + 1.5(x_i - \hat{x}_i)^4\notag \\ &\hspace{2.2cm}- 0.0564(x_i - \hat{x}_i)^2 + 0.1796,\label{eq:R3_V}
 	\end{align}
 	with a set $\gamma_i^\ast = 0.91$. Using Algorithm~\ref{Alg:1}, we compute $\mathscr{L}_i = \max\{0.1754, 0.3133\} = 0.3133$. Since $\mu_i^\ast + \varpi_i^\ast + \mathscr{L}_i\sigma_i = -0.0076 \leq 0$ for all $i \in \{1, \dots, M\}$, it follows from Theorem~\ref{Thm1} that 
 	\(
 	\mathcal{V}(q, x, \hat{x}) = \sum_{i=1}^{M}\mathcal{V}_i^\ast(q_i, x_i, \hat{x}_i),
 	\)
 	with $\mathcal{V}_i^\ast$ as in~\eqref{eq:R3_V}, serves as an ABF between the unknown network and its symbolic model.}
 
 \BS{Using the constructed data-driven symbolic models, we then compositionally design a controller for the interconnected network such that each subsystem's state $x_i$ remains within the safe set $X_i = [-0.25, \, 0.25]$. Specifically, we first synthesize a local controller for each subsystem via its symbolic model using \texttt{SCOTS}~\cite{rungger2016scots}, and subsequently refine these controllers for the original unknown subsystems using the data-driven ASBFs. The overall network controller is a vector whose components correspond to the controllers of individual subsystems. The state trajectories of four representative subsystems and their respective control inputs are shown in Figure~\ref{fig:new_case}.}
 
  \begin{figure}[t!]
 	\centering
 	\subfloat[\centering \BS{Evolution of each subsystem} \label{fig1}]{{\includegraphics[width=0.7\linewidth]{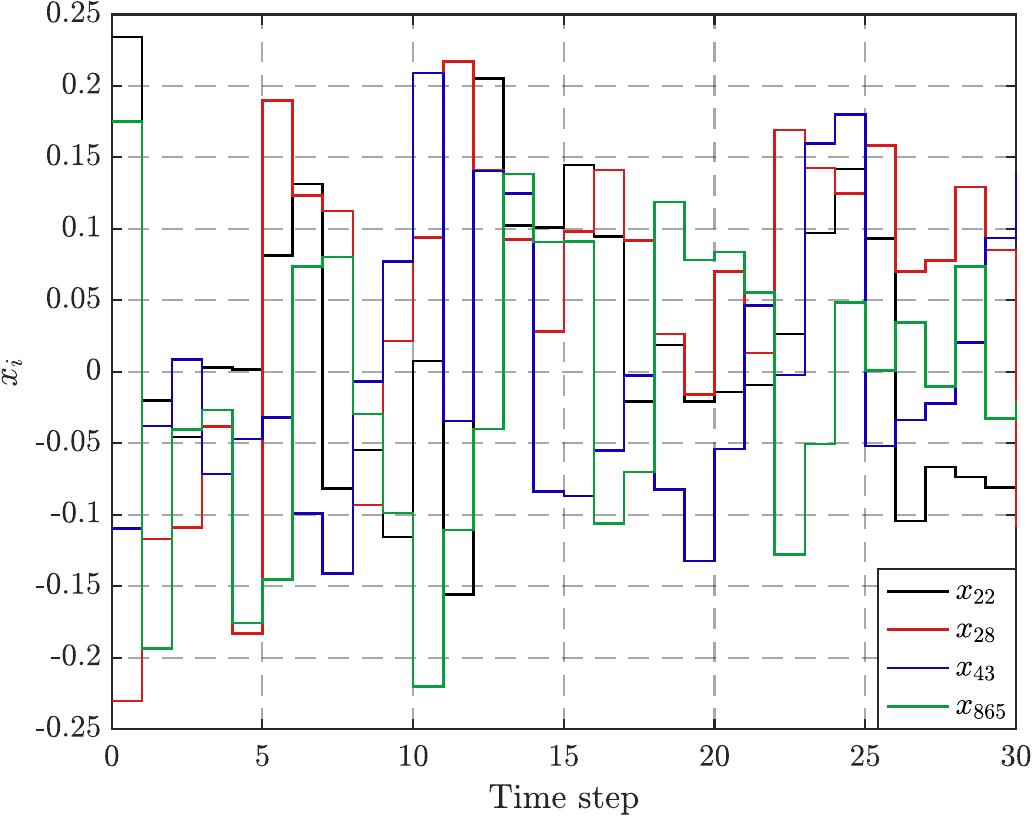} }}%
 	
 	\subfloat[\centering \BS{Control input of each subsystem} \label{fig2}]{{\includegraphics[width=0.7\linewidth]{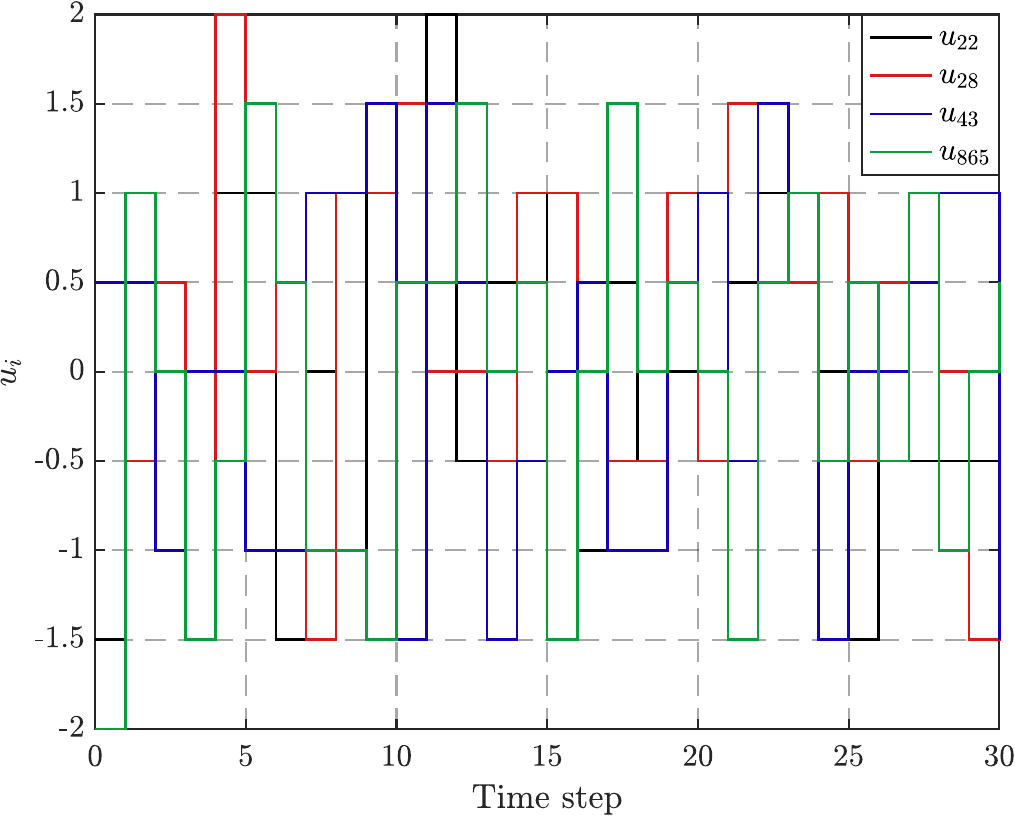} }}%
 	\caption{\BS{State trajectories and corresponding control inputs of four representative subsystems, obtained through controllers synthesized via data-driven symbolic models.} \label{fig:new_case}}%
 \end{figure}
\vspace{0.3cm}
\subsection{Sample Complexity Analysis}
We conduct a thorough examination of sample complexity within the context of monolithic (\emph{i.e.,} addressing the problem directly across the network in one shot) and compositional frameworks. To achieve this, we visually depict the relation between the required amount of data and the number of subsystems for both methodologies in Figure~\ref{Fig2}. It is clear that our data-driven divide-and-conquer approach significantly reduces sample complexity by aligning it with the granularity of subsystems. Consequently, as the number of subsystems increases, the growth in required data remains \emph{linear}. Conversely, in the monolithic approach, sample complexity increases \emph{exponentially} alongside the network's dimension, rendering it operationally impractical.

\begin{figure}[t!]
	\centering
	\resizebox{.8\linewidth}{!}{\begin{tikzpicture}
			\begin{loglogaxis}[
				xlabel=Number of subsystems,
				ylabel=Number of data,
				xmin = 10^0, xmax = 10^2.02,
				ymax = 10^300, ymin = 10^0,
				ytick={10^3,10^50,10^100,10^200,10^300},
				grid=both, major grid style={black!50}, legend style={at={(0.5,-2.75em)},anchor=north, draw=none, anchor=north, legend columns=-1, legend  style={/tikz/every even column/.append style={column sep=0.5cm}}}
				]
				\addplot[color = deeppink, mark = *, mark size = 1.25] coordinates {
					(1, 10^3) (2, 2 * 10^3) (3, 3 * 10^3) (4, 4 * 10^3) (5, 5 * 10^3) (6, 6 * 10^3) (7, 7 * 10^3) (8, 8 * 10^3) (9, 9 * 10^3) (10, 10 * 10^3) (11, 11 * 10^3) (12, 12 * 10^3) (13, 13 * 10^3) (14, 14 * 10^3) (15, 15 * 10^3) (16, 16 * 10^3) (17, 17 * 10^3) (18, 18 * 10^3) (19, 19 * 10^3) (20, 20 * 10^3) (21, 21 * 10^3) (22, 22 * 10^3) (23, 23 * 10^3) (24, 24 * 10^3) (25, 25 * 10^3) (26, 26 * 10^3) (27, 27 * 10^3) (28, 28 * 10^3) (29, 29 * 10^3) (30, 30 * 10^3) (31, 31 * 10^3) (32, 32 * 10^3) (33, 33 * 10^3) (34, 34 * 10^3) (35, 35 * 10^3) (36, 36 * 10^3) (37, 37 * 10^3) (38, 38 * 10^3) (39, 39 * 10^3) (40, 40 * 10^3) (41, 41 * 10^3) (42, 42 * 10^3) (43, 43 * 10^3) (44, 44 * 10^3) (45, 45 * 10^3) (46, 46 * 10^3) (47, 47 * 10^3) (48, 48 * 10^3) (49, 49 * 10^3) (50, 50 * 10^3) (51, 51 * 10^3) (52, 52 * 10^3) (53, 53 * 10^3) (54, 54 * 10^3) (55, 55 * 10^3) (56, 56 * 10^3) (57, 57 * 10^3) (58, 58 * 10^3) (59, 59 * 10^3) (60, 60 * 10^3) (61, 61 * 10^3) (62, 62 * 10^3) (63, 63 * 10^3) (64, 64 * 10^3) (65, 65 * 10^3) (66, 66 * 10^3) (67, 67 * 10^3) (68, 68 * 10^3) (69, 69 * 10^3) (70, 70 * 10^3)(71, 71 * 10^3) (72, 72 * 10^3) (73, 73 * 10^3) (74, 74 * 10^3) (75, 75 * 10^3) (76, 76 * 10^3) (77, 77 * 10^3) (78, 78 * 10^3) (79, 79 * 10^3) (80, 80 * 10^3) (81, 81 * 10^3) (82, 82 * 10^3) (83, 83 * 10^3) (84, 84 * 10^3) (85, 85 * 10^3) (86, 86 * 10^3) (87, 87 * 10^3) (88, 88 * 10^3) (89, 89 * 10^3) (90, 90 * 10^3) (91, 91 * 10^3) (92, 92 * 10^3) (93, 93 * 10^3) (94, 94 * 10^3) (95, 95 * 10^3) (96, 96 * 10^3) (97, 97 * 10^3) (98, 98 * 10^3) (99, 99 * 10^3) (100, 100 * 10^3)
				};
				\addplot[color = denim, mark = square*, mark size = 1.25] coordinates{ 
					(1, 10^3) (2, 10^6) (3, 10^9) (4, 10^12) (5, 10^15) (6, 10^18) (7, 10^21) (8, 10^24) (9, 10^27) (10, 10^30) (11, 10^33) (12, 10^36) (13, 10^39) (14, 10^42) (15, 10^45) (16, 10^48) (17, 10^51) (18, 10^54) (19, 10^57) (20, 10^60) (21, 10^63) (22, 10^66) (23, 10^69) (24, 10^72) (25, 10^75) (26, 10^78) (27, 10^81) (28, 10^84) (29, 10^87) (30, 10^90) (31, 10^93) (32, 10^96) (33, 10^99) (34, 10^102) (35, 10^105) (36, 10^108) (37, 10^111) (38, 10^114) (39, 10^117) (40, 10^120) (41, 10^123) (42, 10^126) (43, 10^129) (44, 10^132) (45, 10^135) (46, 10^138) (47, 10^141) (48, 10^144) (49, 10^147) (50, 10^150) (51, 10^153) (52, 10^156) (53, 10^159) (54, 10^162) (55, 10^165) (56, 10^168) (57, 10^171) (58, 10^174) (59, 10^177) (60, 10^180) (61, 10^183) (62, 10^186) (63, 10^189) (64, 10^192) (65, 10^195) (66, 10^198) (67, 10^201) (68, 10^204) (69, 10^207) (70, 10^210) (71, 10^213) (72, 10^216) (73, 10^219) (74, 10^222) (75, 10^225) (76, 10^228) (77, 10^231) (78, 10^234) (79, 10^237) (80, 10^240) (81, 10^243) (82, 10^246) (83, 10^249) (84, 10^252) (85, 10^255) (86, 10^258) (87, 10^261) (88, 10^264) (89, 10^267) (90, 10^270) (91, 10^273) (92, 10^276) (93, 10^279) (94, 10^282) (95, 10^285) (96, 10^288) (97, 10^291) (98, 10^294) (99, 10^297) (100, 10^300)
				};
				\legend{Compositional approach, Monolithic approach}
			\end{loglogaxis}
	\end{tikzpicture}}
	\caption{Sample complexity analysis. Plot is in logarithmic scale.} 
	\label{Fig2}
\end{figure}

\section{Conclusion}\label{concc}
In this paper, we developed a data-driven divide-and-conquer strategy for analyzing large-scale interconnected networks, characterized by both unknown mathematical models and interconnection topologies. Our approach treated an unknown network as a collection of individual subsystems and aimed to compositionally construct a symbolic model for the network by gathering data from the trajectories of its subsystems. The primary objective was to synthesize control strategies ensuring desired behaviors across unknown networks by utilizing local controllers, derived from data-driven symbolic models of individual agents. To achieve this, we employed alternating sub-bisimulation functions to quantify the similarity between state trajectories of each unknown agent and its data-driven symbolic model. Under the newly developed data-driven compositional conditions, we established an alternating bisimulation function between the unknown network and its symbolic model based on alternating sub-bisimulation functions of agents while ensuring correctness guarantees. Additionally, we illustrated that our data-driven compositional condition eliminates the need for the traditional small-gain condition, which typically requires precise knowledge of the interconnection topology.

\bibliographystyle{IEEEtran}
\bibliography{biblio}	

\begin{IEEEbiography}[{\includegraphics[width=1in,height=1.3in,clip,keepaspectratio]{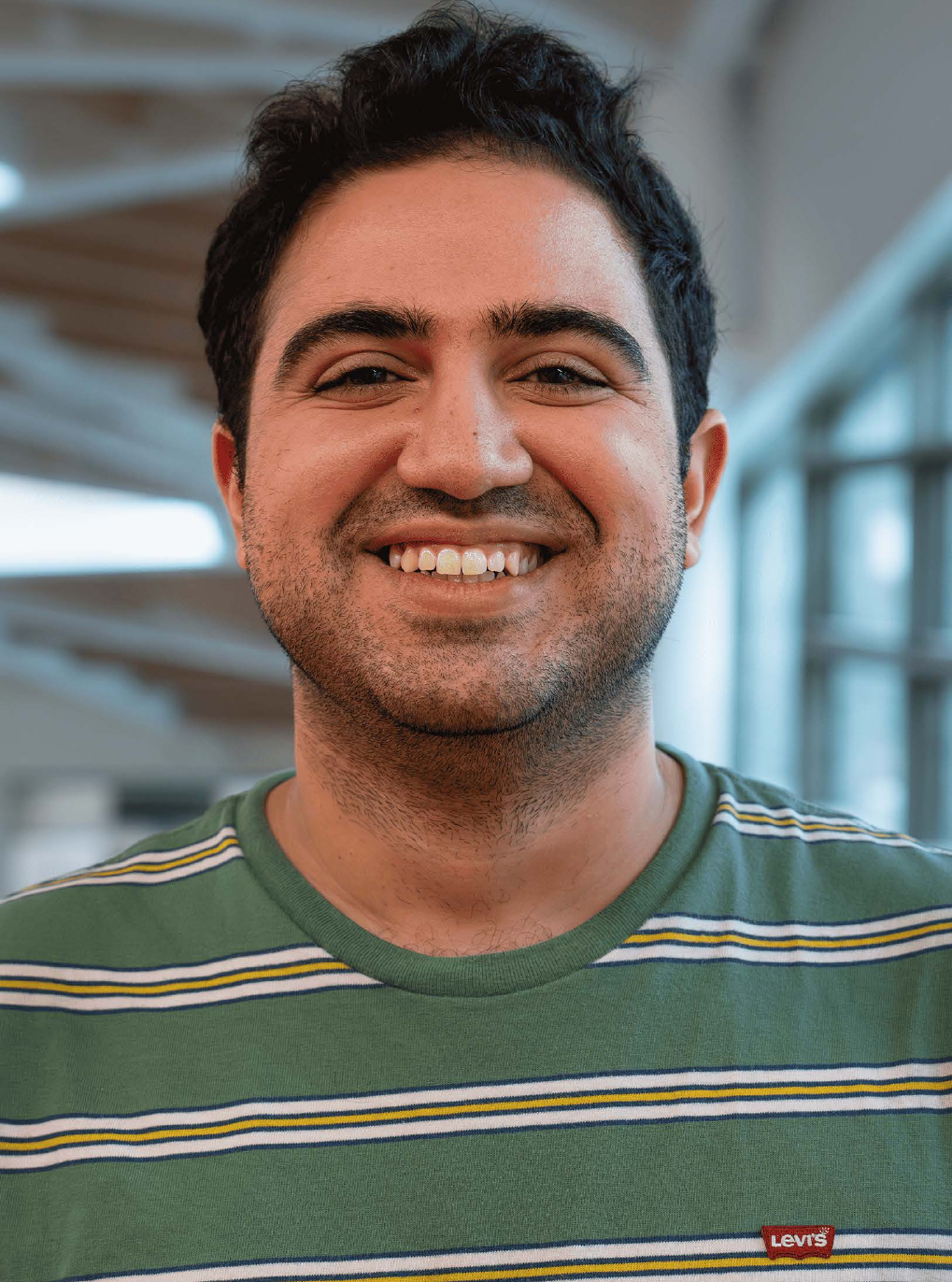}}]{Behrad Samari}~(Student Member, IEEE) received his B.Sc. and M.Sc. degrees in electrical engineering, control major, from K. N. Toosi University of Technology, Tehran, Iran, and University of Tehran (UT), Tehran, Iran, in 2019 and 2022, respectively. He is currently pursuing his PhD in the School of Computing at Newcastle University, U.K. He is the Best Repeatability Prize Finalist at the 8$^{\text{th}}$ IFAC Conference on Analysis and Design of Hybrid Systems (ADHS), 2024. His research interests include (nonlinear) control and system theory, data-driven approaches, and formal methods.
\end{IEEEbiography}

\begin{IEEEbiography}[{\includegraphics[width=1in,height=1.25in,clip,keepaspectratio]{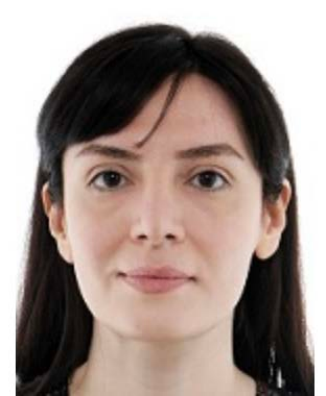}}]{Amy Nejati}~(M'18--SM'25) is an Assistant Professor in the School of Computing at Newcastle University in the United Kingdom. Prior to this, she was a Postdoctoral Associate at the Max Planck Institute for Software Systems in Germany from July 2023 to May 2024. She also served as a Senior Researcher in the Computer Science Department at the Ludwig Maximilian University of Munich (LMU) from November 2022 to June 2023. She received the PhD in Electrical Engineering from the Technical University of Munich (TUM) in 2023. She has received the B.Sc. and M.Sc. degrees both in Electrical Engineering. Her line of research mainly focuses on developing efficient (data-driven) techniques to design and control highly-reliable autonomous systems while providing mathematical guarantees. She was selected as a Best Repeatability Prize Finalist at ACM HSCC 2025 and as one of the CPS Rising Stars 2024.
\end{IEEEbiography}

\begin{IEEEbiography}
	[{\includegraphics[width=1in,height=1.25in,clip,keepaspectratio]{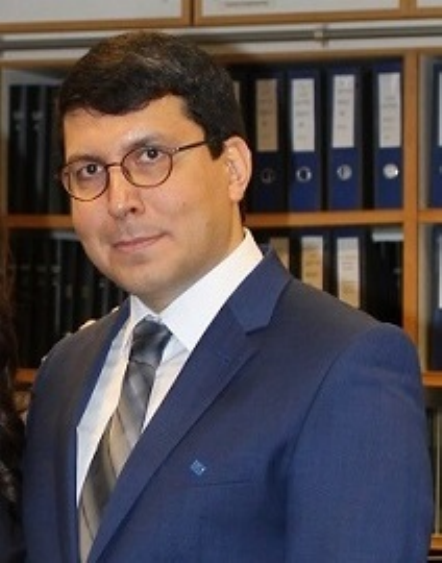}}]{Abolfazl Lavaei}~(M'17--SM'22)
	is an Assistant Professor in the School of Computing at Newcastle University, United Kingdom. Between January 2021 and July 2022, he was a Postdoctoral Associate in the Institute for Dynamic Systems and Control at ETH Zurich, Switzerland. He was also a Postdoctoral Researcher in the Department of Computer Science at LMU Munich, Germany, between November 2019 and January 2021. He received the Ph.D. degree in Electrical Engineering from the Technical University of Munich (TUM), Germany, in 2019. He obtained the M.Sc. degree in Aerospace Engineering with specialization in Flight Dynamics and Control from the University of Tehran (UT), Iran, in 2014. He is the recipient of several international awards in the acknowledgment of his work including ADHS Best Repeatability Prize (Finalist) 2024 and 2021, HSCC Best Demo/Poster Awards 2022 and 2020, IFAC Young Author Award Finalist 2019, and Best Graduate Student Award 2014 at University of Tehran with the full GPA (20/20). His research interests revolve around the intersection of Control Theory, Formal Methods, and Statistical Learning Theory.
\end{IEEEbiography}

\end{document}